\newcommand{\ins}{\underset{\text{\scriptsize S}}{\overset{\text{\scriptsize ins}}{\Longrightarrow}}}
\newcommand{\del}{\underset{\text{\scriptsize S}}{\overset{\text{\scriptsize del}}{\Longrightarrow}}}
\newcommand{\insdel}{\underset{\text{\scriptsize S}}{\overset{}{\Longrightarrow}}}
\newcommand{\insa}[2][\,]{\underset{\text{\scriptsize #2}}{\overset{\text{\scriptsize ins$_{#1}$}}{\Longrightarrow}}}
\newcommand{\dela}[2][\,]{\underset{\text{\scriptsize #2}}{\overset{\text{\scriptsize del$_{#1}$}}{\Longrightarrow}}}
\newcommand\y{\cellcolor[gray]{0.8}}
\newcommand{\N}{\mathbb N}
\newcommand{\RW}{\mathbb{RW}}
\newcommand{\FDRW}{\mathbb{FDRW}}
\theoremstyle{plain}
\newtheorem{theorem}{Theorem}
\newtheorem{example}{Example}
\newtheorem{lemma}[theorem]{Lemma}
\newtheorem{corollary}[theorem]{Corollary}
\newtheorem{claim}[theorem]{Claim}
\newtheorem{definition}{Definition}
\begin{document}

\title{On insertion-deletion systems over relational words}

\author{Igor Potapov\\
\small {\it potapov@liverpool.ac.uk}\\
\small Department of Computer Science, University of Liverpool\\
\small Ashton Building, Ashton Street, Liverpool L69 3BX, UK\\\\
Olena Prianychnykova\\
\small {\it olena.prian@tu-ilmenau.de}\\
\small Fachgebiet Automaten und Logik, Instituts f\"ur Theoretische Informatik,\\
\small Technische Universit\"at Ilmenau,
\small D-98684, Ilmenau, Germany\\\\
Sergey Verlan\\
\small {\it verlan@u-pec.fr}\\
\small LACL, Departement Informatique,  Universit\'e Paris Est Cr\'eteil\\
\small  61, av. General de Gaulle, 94010 Cr\'eteil, France}

\maketitle

\begin{abstract}

We introduce a new notion of a relational word as a finite totally ordered
set of positions endowed with three binary relations that describe which
positions are labeled by equal data, by unequal data and those having an
undefined relation between their labels. We define the operations of
insertion and deletion on relational words generalizing corresponding
operations on strings.  We prove that the transitive and reflexive closure of
these operations has a decidable membership problem for the case of short
insertion-deletion rules (of size two/three and three/two). At the same time,
we show that in the general case such systems can produce a coding of
any recursively enumerable language leading to undecidabilty of reachability questions.

{\bf AMS Subject Classification:} F.4.2 Grammars and Other Rewriting Systems, F.4.3 Formal Languages

{\bf Keywords:} Infinite alphabet, relational words, insertion-deletion, membership 

 \end{abstract}

\section{Introduction}
Nowadays there is a sufficiently broad research activity in the area of
logic and automata for words and trees over infinite alphabets. It is mainly motivated by the
need to analyse and verify infinite-state systems, which for example can use infinite alphabet
of natural numbers {1, 2, 3, . . .} instead of finite number of symbols like {a, b, c}. In the
seminal paper of M. Kaminski and N. Francez \cite{Kaminski}  a very restricted memory structure of the
automaton (Register Automaton) working with words over infinite alphabets was introduced.
The register automaton is operating by keeping a finite number of symbols (from the working
tape) in its memory and making their comparison to other observed symbols. The model
allows recognising a large class of languages over infinite alphabet and at the same time is
not taking any advantage of its memory capabilities beyond what is needed for that purposes.
Later, more models  including automata on data words,  data trees,  pebble automata, etc
have been considered in the realm of semistructured data, timed automata and extended temporal logics \cite{Neven, Segoufin}.
Following motivation from program verification, analysis of XML query languages and other systems
operating explicitly with data values, the research was focused on characterization of automata models and
logics manipulating data in terms of expressive power and decidability.

Comparing to language recognizers,  more complex systems operating with words over infinite alphabet
may require updating them in addition to the operations of comparison between symbols. Obviously, unrestricted
and very general rules allowing rewriting over arbitrary infinite alphabet are too powerful
making most of the computational problems to be undecidable \cite{Bouajjani_FCT2007}. On the other hand there are
existing fragments of rewriting systems over infinite alphabet with a decidable word problem
(i. e. the algorithmic problem of deciding whether two given representatives represent the same
element of the set). These examples are not limited to classical computer science objects, but
also include examples from other areas. One of such examples is unknotedness and equivalence of knots, where
words over infinite alphabet are Gauss words (or Gauss diagrams) and the system of rewriting
rules is a set of Reidemeister moves represented by insertion/deletion and swapping some of
the symbols on Gauss words \cite{Saleh}. While the set of the Reidemeister moves is quite powerful the
word problem for such rewriting rules on Gauss words is decidable following algorithms from
combinatorial topology.

In this paper we aim to extend the concept of the computations on words over infinite alphabet
but preserving original idea of indirect references, i.e. computations where we only make
comparison between positions in our data without explicit references to their values.
In particular we extend the notion of a word on an infinite alphabet by allowing the
equivalence relation to be defined on a subset of the set of positions of a word.
A new notion of {\sl relational word} is defined as a finite set of positions equipped with binary relations
that describe which positions are labeled by equal and non equal data, while for some pairs of
positions the relation between their labels could be undefined. Similar idea of representing data over
a finite alphabet as a set of relations was also named as a ``relational code'' can be found in \cite{HHK2007},
which generalize ``partial words'' in the area of nonstandard stringology \cite{Muthukrishnan} and
DNA sequence processing \cite{Leupold}.
Another example can be found in \cite{Bojanczyk} where authors introduce a first-order logic $FO(\sim,<,+1)$ , where
for every formula $\varphi$ in this logic the set $L(\varphi)$ is the set of data words that satisfy a sentence $\varphi$.
This approach also allows the specification using a kind of a template, the main difference from the model proposed
in the present paper being the unbounded length of the specified string (in the case of the model from this paper
all the words corresponding to the same relational word have the same fixed length).
All of the above models study words and languages described by some relations, but do not define any rewriting on
these structures.

In the concept of {\sl relational word}, instead of a particular words over an infinite alphabet,
we can operate with templates that may represent finite or
infinite languages depending on a choice of the alphabet.  This gives us an opportunity
to define rewriting of data on a new conceptual level focusing only on the operations of
rewriting based on existing relations in data and abstracting from actual data on which we may operate.

The rewriting system on data is interesting both from theoretical and practical aspects, see \cite{MEMICS2009,Bouajjani_FCT2007,ICDT2014}.
In this paper we consider a very natural rewriting system, motivated by the knot theory, in which only insertion and deletion operations are defined.  Also insertion and deletion are considered to be the basic operations in DNA processing and RNA editing \cite{Petre} and in the context of an infinite alphabet insertion-deletion systems are important for reasoning
about recursive sequential programs, multithreaded programs, parametrized and dynamic networks of processes, etc \cite{Bouajjani_FCT2007}.
In particular we study the membership problem: for a given set of insertion/deletion operations defined by relational words
decide whether a relational word $w$ can be derived from an empty word. We show that for any system which
inserts 2 symbols and deletes 3 symbols or vice versa the membership is decidable. We also
show that this does not hold anymore for longer insertion and deletion rules -- in this case the membership and the word problems are undecidable.

\section{Preliminaries}
\subsection {Relational words}
A finite sequence of elements of a finite alphabet $\Sigma$ is called a finite word over $\Sigma$, or just a word. We denote by $\Sigma^*$
the set of words over $\Sigma$ and by $\Sigma^+$ the set of
nonempty words. The empty word is denoted by $\varepsilon$.

Let $\Delta$ be an infinite set. A word over an infinite alphabet $\Delta$ is a
finite sequence of elements of $\Delta$ \cite{Kaminski,Manuel,Neven,Segoufin}.
Elements of a finite alphabet $\Sigma$ are defined explicitly and could be
accessed directly, while elements of an infinite alphabet $\Delta$ could be
only tested for equality. Then a word over an infinite alphabet may be viewed
as a finite totally ordered set of positions endowed with an equivalence
relation.

A well-known example of words over an infinite alphabet are data
words\cite{Manuel,Segoufin}. Let $\Sigma$ be a finite alphabet and $D$ be an
infinite set of data values. A data word is a finite sequence over $\Sigma
\times D$, i.e., in a data word each position carries a label from a finite
alphabet and a data value from some infinite domain. A data word may be viewed
as a word over finite alphabet $\Sigma$ with an equivalence relation on the set
of its positions \cite{Bojanczyk}.

Now the idea of this paper is to extend the notion of a word over an infinite
alphabet by allowing the equivalence relation to be defined on a subset of the
set of positions of the word. We define a relational word as a finite set of
positions equipped with  binary relations that describe which positions are
labeled by equal and by inequal data, while for some pairs of positions the
relation between their labels is not defined.

A relational word can be viewed as a kind of a template. For an alphabet $A$
(finite or infinite) a relational word $W$ defines a language $L_A(W )\in A^*$
which is the set of all words $w=a_1 a_2 .... a_n$,  where $a_i\in A$, $1\le
i\le n$, with $n$ being the length of $W$, such that for every pair of
positions $i$ and $j$ in $W$ we have
\begin{itemize}
  \item if $(i,j)$ belongs to the equality relation, then $a_i=a_j$
  \item if $(i,j)$ belongs to the inequality relation, then $a_i \neq a_j$
\end{itemize}

We remark that if any pair of positions of a relational word $W$ is a member of
a relation (equality or inequality), then $L_A(W)$ can be identified with any
element $w\in L_A(W)$, as based on $w$ and $A$ it is possible to reconstruct
$L_A(W)$. This gives the possibility to represent a relational word using
ordinary words, where same letters represent the equality relation and
different letters the inequality one. If it will be clear from the sequel we
will not indicate the index $A$ in $L_A(W)$.

With every relational word $W$ we can associate a graph $G^W=(Q,T)$ and an edge
labeling function $Lab_{G^W}: T\to \{0,1\}$ such that
\begin{itemize}
  \item $Q=\{q_1,q_2,...,q_n\}$ is an ordered set of nodes, $n$ is the length
      of $W$,
  \item $T\subseteq Q\times Q$ is the set of edges such that $(q_i,q_j)\in T$
      iff there is a relation (equality or inequality) between positions $i$
      and $j$.
  \item $Lab_{G^W}$ is defined as follows
      \begin{itemize}
        \item $Lab_{G^W}(q_i,q_j)=1$ if the labels of the positions $i$ and
            $j$ are equal,
        \item $Lab_{G^W}(q_i,q_j)=0$ iff the labels of the positions $i$
            and $j$ are not equal.
      \end{itemize}
\end{itemize}

We will use the following convention for the graphical representation of $G^W$.
The nodes of the graph will be aligned horizontally and the order of nodes
taken from left to right will correspond to their ordering within the graph. We
will depict edges labeled by $1$ below the axis induced by the node alignment
and the edges labeled by $0$ on the top of it. We also note that for any $q_i$,
there exist an edge $(q_i,q_i)$ labeled by $1$. In order to simplify the
pictures we will not draw corresponding self-loops.

With every relational word $W$ we associate the matrix $M^W\in {\{0,1,2\}}^{n \times n}$ where  $n$ is the length
      of $W$, as follows:\\
{\small      $M^W[i,j]= \begin{cases} 1 & \text{ iff  the labels of the
positions $i$ and
            $j$ are equal} \\
0 & \text{ iff the labels of the positions $i$
            and $j$ are not equal} \\
2 & \text{ iff the relation between the labels of the positions $i$
            and $j$ is not defined}
\end{cases}$
}
\begin{example}
Let us consider the relational word  $W$ of length $4$ such that the labels of
the first and the third position are equal, the label of the second position is
not equal to them, and the relations of the label of the fourth position to all
others are undefined. The graph that represents $W$ and the corresponding
matrix are shown on the Fig.~\ref{fig:ex1}.
\begin{figure}[h]
\begin{center}
\hfil\parbox[c]{5cm}{\includegraphics[scale=0.5]{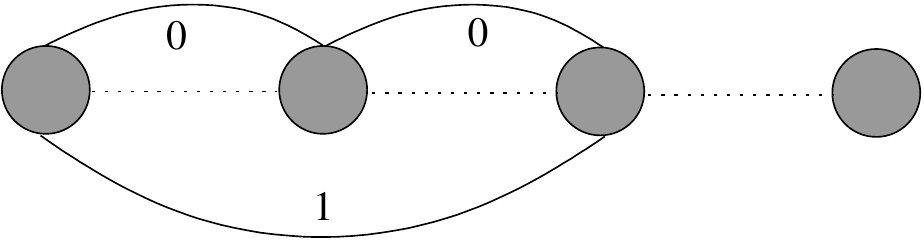}} \hfil
\parbox[c]{5cm}{\footnotesize $M^W=\left(\begin{array}{cccc}
         1 & 0 & 1 & 2 \\
         0 & 1 & 0 & 2 \\
         1 & 0 & 1 & 2 \\
         2 & 2 & 2 & 1 \\
       \end{array}
     \right)$}
     \hfil
\caption{An example of a relational word and the corresponding matrix.}\label{fig:ex1}
\end{center}
\end{figure}

\noindent Let $A=\{a\}$ , then $L(W)=\emptyset$.\\
Let $A=\{a,b\}$, then $L(W)=\{abaa, abab, baba, baba\}$.\\
Let $A=\{a,b,c\}$, then $L(W)=\{abaa; abab; abac; baba; babb; babc; acaa; acab;\\ acac; caca; cacb; cacc; bcba; bcbb; bcbc; cbca; cbcb; cbcc \}$.\\
\end{example}

 Let $W$ be a fully defined relational word and $A$ be an
alphabet. Then every word $w\in L(W )$ is an assignment of equivalence classes
to symbols of $A$. Hence if the alphabet $A$ is finite, then the language $L(W
)$ is finite (and thus it is regular). Moreover,  $|L(W)|=C^k_n$ where $k$ is
the number of classes of the equivalence relation and $n$ is the size of the
alphabet $A$. If the relational word $W$ is not fully defined, then for every
$w\in L(W )$ the number $k$ of different symbols in $w$ could not be larger
than the length of the word $W$. Then for every alphabet $A$ and every
relational word $W$ the set $L(W )$ is finite if and only if $A$ is finite.

Now we give a formal definition of a relational word.

\begin{definition}
A \em{relational word} is a relational structure $W=(X^W, E^W, N^W)$ where
\begin{itemize}
  \item $X^W=(X^W, \prec )$ is a finite totally ordered set;
  \item $E^W$ and $N^W$ (for equal and not equal) are binary relations on
      $X^W$ such that
      \begin{itemize}
        \item they are mutually exclusive: $E^W\cap N^W=\emptyset$;
        \item $E^W$ is an equivalence relation;
        \item $N^W$ is a symmetric relations;
        \item for every $x,y,z \in X^W$,  if $(x,y)\in E^W$, then $(x,z)\in
            R^W \text{if } (y,z)\in R^W, R\in\{E, N\}$.
      \end{itemize}
\end{itemize}
\end{definition}

For technical reasons we shall consider the relation $U^W=X^W\times
X^W\setminus (E^W\cup N^W)$ corresponding to an undefined relation between
pairs of positions.

We denote by $|W|=|X^W|$  the length of the relational word $W$ and by $W[i]$
the $i$-th element from the ordering of $X^W$. The empty relational word is
denoted by $\varepsilon$, $|\varepsilon|=0$. A relational word $W$ is fully
defined if $U^W=\emptyset$.

We denote \emph{the set of all relational words} by $\mathbb{RW}$ and \emph{the set of all fully defined relational words} is denoted by $\mathbb{FDRW}$.

\begin{example}
Let us consider the relational word $W$ from the Example 1. We have that $X^W=\{x_1, x_2, x_3, x_4\}$; $x_1\prec x_2\prec x_3\prec x_4$; $x_1$ is equal to $x_3$, $x_2$ is not equal to $x_1$ and $x_3$, the relations between $x_4$ and $x_1, x_2, x_3$ are undefined, i.e.,
\begin{itemize}
  \item $E^W=\{(x_1, x_1), (x_2, x_2), (x_3, x_3), (x_4, x_4), (x_1, x_3), (x_3, x_1) \}$;
  \item $N^W=\{(x_1, x_2), (x_2, x_1), (x_2, x_3), (x_3, x_2)\}$;
\end{itemize}
\end{example}


\begin{definition}
Two relational words $W$ and $V$ are equal if $|W|=|V|=n$,  and for every
$1\leq i,j \leq n$ we have {\small $(i, j)\in \begin{cases}
E^V & \text{ iff }(i, j)\in E^W, \\
N^V & \text{ iff } (i, j)\in N^W.\
\end{cases}$}

\end{definition}

We also introduce the notion of \emph{contradiction} for relational words.
Informally, word $W$ contradicts  word $V$ if it is impossible to get the same
fully defined relational word by instantiating to equality or inequality the
relations between undefined positions of $W$ and $V$. If $W$ contradicts $V$,
then $L(W)\cap L(V)=\emptyset$. Formally we define contradiction as follows.

\begin{definition}
A relational word $W$ contradicts a relational word $V$ if $|W|=|V|=n$,  and there are $i$ and $j$ such that $1\leq i,j \leq n$ and either  $(i, j)\in E^W$ and $(i, j)\in N^V$, or $(i, j)\in N^W$ and $(i, j)\in E^V$.
\end{definition}

In order to be able to work not only with a relational word as a whole but with parts of it, we will need a notion of a subword.

\begin{definition}
A relational word $W$ is a scattered subword of $V$  if $X^W\subseteq X^V$ and
for every $x,y \in X^W$ we have {\small $(x,y)\in  \begin{cases}
E^W & \text{ iff }(x,y)\in E^V,\\
N^W& \text{ iff } (x,y)\in N^V.\\
\end{cases}$
}

A relational word $W$ is a \em{subword} of $V$ if it is a scattered subword of $V$ and for every $x,y \in X^V$ if there are $y,z\in X^W$ such that $y\prec x \prec z$, then $x\in X^W$.
\end{definition}

\begin{example}\label{ex:sw} Fig.~\ref{fig:sw} depicts the above notions.

\begin{figure}[ht]
\hfill
  \begin{minipage}{0.34\textwidth}
    \includegraphics[scale=0.3]{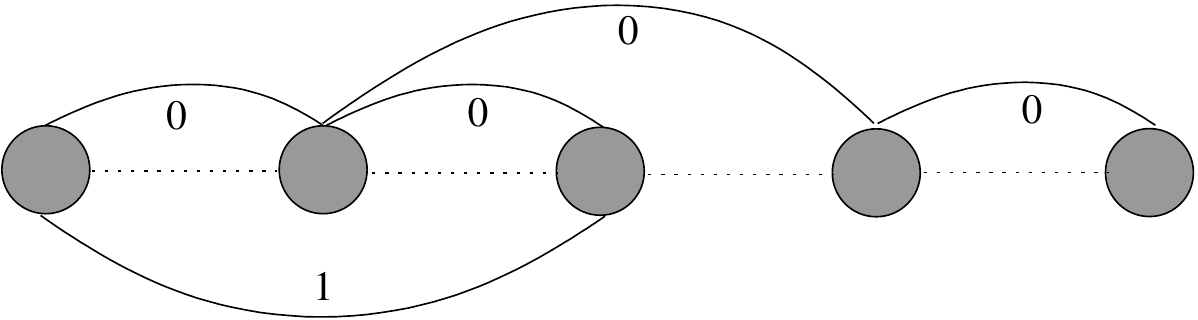}
  \\(a)  relational word $W$ \\
  \end{minipage}
  \hfill
  \begin{minipage}{0.33\textwidth}
     \includegraphics[scale=0.3]{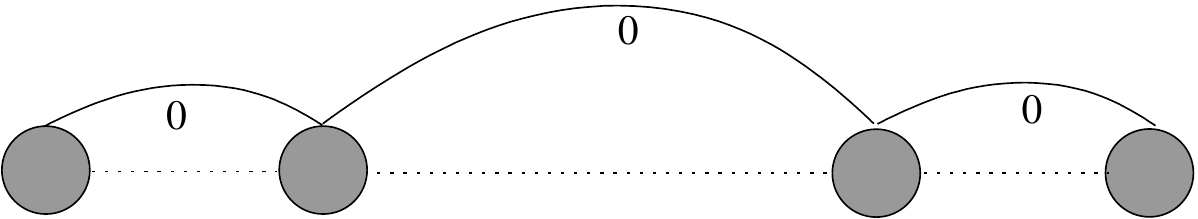}
   \\  (b) scattered subword of $W$\\
  \end{minipage}
  \hfill
  \begin{minipage}{0.3\textwidth}
    \includegraphics[scale=0.3]{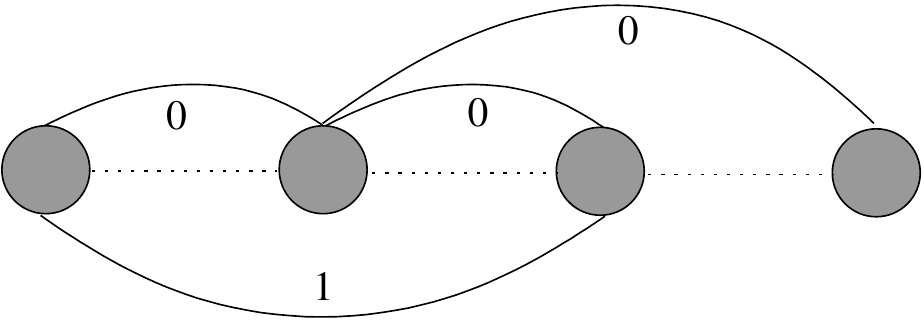}
 \\(c) subword  of $W$\\
  \end{minipage}
\hfill
\vspace{-3mm}
   \caption{An example of a  relational word with its scattered subword and
   subword}\label{fig:sw}
\end{figure}
\end{example}

\vspace{-4mm}
 With every relational word $W$ we associate its numerical
characteristics:
\begin{enumerate}
  \item $maxFD(W)$ is the length of the longest fully defined scattered subword of $W$,
  \item $maxE(W)$ is  the length of the longest scattered subword $W'$ of $W$ such that every two elements of that subword are equal, i.e., for every $x,y \in X^{W'}$ we have $(x,y)\in E^W$,
\end{enumerate}

\begin{example}
Consider $W$ from the Example~\ref{ex:sw} we have $maxFD(W) = 3$, $maxE(W) =
2$, $maxN(W) = 2$.
\end{example}

\subsection {Insertion-deletion systems on relational words}
\begin{definition}
An insertion-deletion scheme $S$  is a pair $S = (INS, DEL)$ where
$INS\subseteq \mathbb{FDRW}$ is the set of insertion rules and $DEL\subseteq
\mathbb{FDRW}$ is the set of deletion rules \cite{Margenstern,Petre}.
\end{definition}

The insertion-deletion scheme $S = (INS, DEL)$ is called \emph{simple} if it
contains only one insertion rule and only one deletion rule, i.e., $INS=\{I\}$,
$DEL=\{D\}$ where $I,D \in \mathbb{FDRW}$.

We denote  by $I_nD_m$ the set of all simple insertion-deletion schemes such
that the length of the insertion rule is $n$ and the length of the deletion
rule is $m$.

Now we define the operations of insertion and deletion operations on relational
words.

Informally, given $W,V \in \mathbb{RW}$ we understand the single-step insertion
relation $W \ins V$ as follows(Fig.~\ref{fig:ssi}): to obtain $V$, we take $W$
and $Y \in INS$ and ``insert'' $Y$ as a subword between any two symbols of $W$.
We assume that for every pair $(x,y)$, where $x$ is a symbol of $W$ and $y$ is
a symbol of $Y$, the relation between $x$ and $y$ is undefined.

\begin{figure}[ht]
\centering
\includegraphics[scale=0.4]{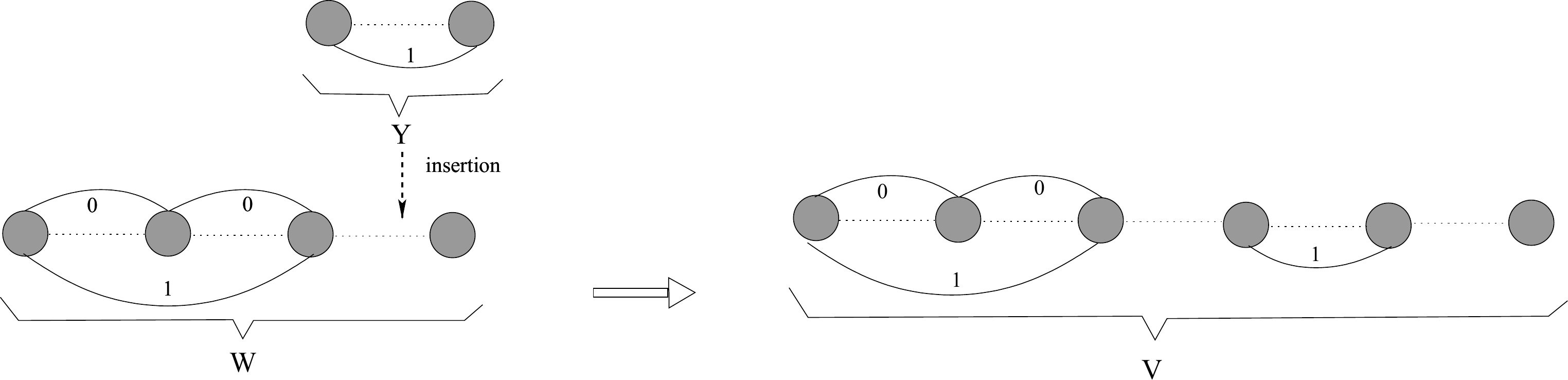}
\caption{The single-step insertion relation $W \ins V$}\label{fig:ssi}
\end{figure}

In matrix notation the operation from Fig.~\ref{fig:ssi} can be represented as: \\
$\left(
       \begin{array}{cccc}
         1 & 0 & 1 & 2 \\
         0 & 1 & 0 & 2 \\
         1 & 0 & 1 & 2 \\
         2 & 2 & 2 & 1 \\
       \end{array}
     \right)   \insa[3]{Y}
     \left(
       \begin{array}{ccc|cc|c}
         1 & 0 & 1 & 2 & 2 & 2\\
         0 & 1 & 0 & 2 & 2 & 2\\
         1 & 0 & 1 & 2 & 2 & 2\\
                    \hline
         2 & 2 & 2 & 1 & 1 & 2\\
         2 & 2 & 2 & 1 & 1 & 2\\
                    \hline
         2 & 2 & 2 & 2 & 2 & 1\\
       \end{array}
     \right)
      $, where $Y= \left(
       \begin{array}{cc}
         1 & 1 \\
         1 & 1 \\
       \end{array}
     \right) $.

Formally, we define this relation as follows.

Consider the function $s_{k,m}:\N\to \N$, $k,m\in\N$ defined as follows
$$\small s_{k,m}(i)=\begin{cases}
i & \text{ if } 1\le i\le k,\\
i+m & otherwise
\end{cases}
$$

\begin{definition}\label{def:insertion}

The single-step insertion relation on  $\mathbb{RW}$ that is induced by $S =
(INS, DEL)$ is defined as follows. For any $W,V \in \mathbb{RW}$, $Y\in INS$
and an integer $0\le k\le |W|$ we have $W \insa[k]{Y} V$  iff
\begin{itemize}
  \item $(i,j)\in R^W$ implies $(s_{k,m}(i),s_{k,m}(j))\in R^V$, where
      $m=|Y|$ and $R\in \{E,N\}$,
  \item $(i,j)\in R^Y$ implies $(i+k,j+k)\in R^V$, where $1< i,j\le |Y|$,
      $R\in \{E,N\}$.
\end{itemize}
\end{definition}

If we are not interested by the site of the insertion or by the concrete
insertion rule then we will write $W\ins V$, meaning that there exists $Y\in
INS$ and $k\ge 0$ such that $W\insa[k]{Y}V$.

\begin{definition}
The insertion relation on $\mathbb{RW}$ that is induced by $S = (INS, DEL)$ is the reflexive, transitive closure of $\ins$ and is denoted by $\ins^*$.
\end{definition}

Now we explain the deletion relation. Informally, the application of the
deletion rule $W \del V$ consists of two steps: expansion and deletion
(Fig.~\ref{fig:ssd}). First, we have to find a subword $Y'$ in the relational
word $W$ that does not contradict to a relational word $Y\in DEL$ and to
``expand'' it to $Y$: for every symbols $x$ and $y$ in $Y'$ such that the
relation between them is undefined, we set this relation to be the same as the
relation between the corresponding symbols of $Y$ (a thick line on
Fig.~\ref{fig:ssd}). In order to preserve transitivity, if we define that $x$
is equal to $y$, then we have to connect to $x$ all nodes incoming to $y$ and
using the same label (dotted lines on Fig.~\ref{fig:ssd}). Next, we take the
``expanded'' subword out of the word $W$ and obtain the word $V$.
\begin{figure}[ht]
\centering
\includegraphics[scale=0.3]{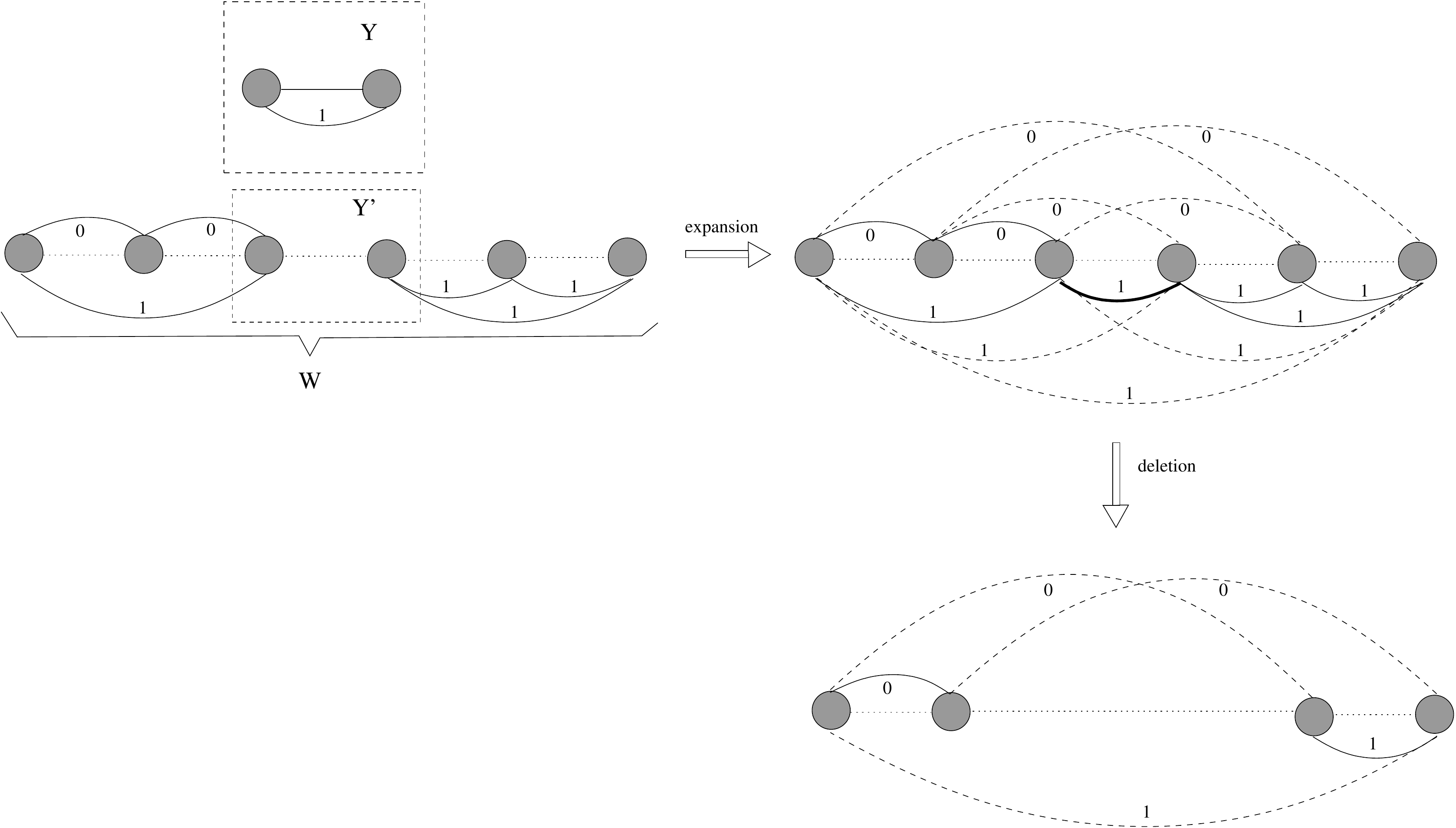}
\caption{The single-step deletion relation $W \del V$}\label{fig:ssd}
\end{figure}
In matrix notation the operation from Fig.~\ref{fig:ssd} can be represented as: \\
$ \left(
       \begin{array}{cc|cc|cc}
         1 & 0 & 1 & 2 & 2 & 2\\
         0 & 1 & 0 & 2 & 2 & 2\\
          \hline
         1 & 0 & \y1 & \y2 & 2 & 2\\
         2 & 2 & \y2 & \y1 & 1 & 1\\
          \hline
         2 & 2 & 2 & 1 & 1 & 1\\
         2 & 2 & 2 & 1& 1 & 1\\
       \end{array}
     \right) \dela[3]{Y}
      \left(
       \begin{array}{cccc}
         1 & 0 &  0 & 1\\
         0 & 1 &  2 & 0\\
         0 & 2 &  1 & 1\\
         1 & 0 &  1 & 1\\
       \end{array}
     \right)
$, where $Y= \left(
       \begin{array}{cc}
         1 & 1 \\
         1 & 1 \\
       \end{array}
     \right) $.

Formally the single-step deletion relation is defined as follows.

\begin{definition}\label{def:deletion}
The single-step deletion relation on $\mathbb{RW}$ that is induced by $S =
(INS, DEL)$ is defined as follows. For any $W,V \in \mathbb{RW}$, $Y \in Del$
and a integer $1\le k\le |W|$ we have $W \dela[k]{Y} V$ if
\begin{itemize}
\small
  \item $(i,j)\in R^W$ implies $(s^{-1}_{k-1,m}(i),s^{-1}_{k-1,m}(j))\in
      R^V$, where $m=|Y|$ and $R\in \{E,N\}$,
  \item $(i,j)\in R^Y$ implies
    \begin{itemize}
      \item either $(i+k-1,j+k-1)\in R^W$, where $1< i,j\le |Y|$, $R\in
          \{E,N\}$,
      \item or $(i+k-1,j+k-1)\in U^W$ and for all pairs $(i+k-1,q)\cup
          (p,j+k-1)\in E^W$ it holds
          $(s^{-1}_{k-1,m}(p),s^{-1}_{k-1,m}(q))\in R^V$ ($R\in \{E,N\}$),
      \item or $(i+k-1,j+k-1)\in U^W$, $R=E$ and for all pairs
          $(i+k-1,q)\in E^W$, $(i+k-1,q')\in N^W$,  $(p,j+k-1)\in E^W$,
          $(p',j+k-1)\in N^W$ it holds
          $(s^{-1}_{k-1,m}(p),s^{-1}_{k-1,m}(q))\in N^V$ and
          $(s^{-1}_{k-1,m}(p'),s^{-1}_{k-1,m}(q'))\in N^V$.

     \end{itemize}
\end{itemize}
\end{definition}

As for insertion we will write $W\del V$, meaning that there exists $Y\in DEL$
and $k\ge 1$ such that $W\dela[k]{Y}V$.

\begin{definition}
The deletion relation on $\mathbb{RW}$ that is induced by $S = (INS, DEL)$ is the reflexive, transitive closure of $\del$ and is denoted by $\del^*$.
\end{definition}

Each of the relations $\ins$ and $\del$ is denoted by $\insdel$ and  the reflexive, transitive closure of  $\insdel$ is denoted by $\insdel^*$.

\begin{definition}
An insertion-deletion system is the tuple $S=(V,INS,DEL,A)$, $V$ is an
alphabet, $(INS,DEL)$ is an insertion-deletion scheme and $A\subseteq V^*$ is
the initial language (the axioms) of the system.
\end{definition}

If $A=\emptyset$ then we will use a shorthand notation denoting the
corresponding system as $S=(INS,DEL)$, i.e. we will identify it by the
corresponding insertion-deletion scheme.

\begin{definition}
For an insertion-deletion system $S = (V,INS, DEL,A)$  we define the language
set $L(S)= \{W\in \mathbb{RW}\mid Z \insdel^* W, Z\in A\}$ and the set $FDL(S)=
\{W\in \mathbb{FDRW}\mid Z \insdel^* W, Z\in A\}$.
\end{definition}

\section {Main results}
\begin{lemma} \label{lem:1} 
For every insertion-deletion system $S$ and every $W,V \in \mathbb{RW}$ if $W \insdel^* V$, then there is $Y\in \mathbb{RW}$ such that $W \ins^* Y \del^* V$.
\end{lemma}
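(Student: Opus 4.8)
The plan is to establish the result by a commutation (swapping) argument: I will show that any single deletion immediately followed by a single insertion can be replaced by a single insertion followed by a single deletion with the same two endpoints, and then use this local swap to ``bubble-sort'' an arbitrary derivation so that all insertions precede all deletions.

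More precisely, the core step is the following local commutation claim: for all $W, X, V \in \RW$, if $W \del X$ and $X \ins V$, then there exists $Y \in \RW$ with $W \ins Y$ and $Y \del V$. To prove it, I describe the deletion of $W$ as removing a contiguous block $B$, so that $W = A\,B\,C$ (as an ordered concatenation of positions) and $X = A\,C$ after the expansion-and-removal of $B$. The subsequent insertion places an inserted copy $Y'$ of some rule from $INS$ into one of the gaps of $X$; every such gap of $X = A\,C$ corresponds to a gap of $W = A\,B\,C$ lying entirely to the left of $B$, entirely to the right of $B$, or at the seam between $A$ and $C$ (in which case I may insert $Y'$ either immediately before or immediately after $B$ in $W$). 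In each case I first perform this insertion in $W$, obtaining $Y$ (for the seam, e.g.\ $Y = A\,Y'\,B\,C$), and then delete the same block $B$.

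The verification that this yields exactly $V$ rests on one key observation: by Definition~\ref{def:insertion} every newly inserted position has an \emph{undefined} relation to every position already present, so the inserted copy $Y'$ is related (in the $E$/$N$ sense) only to itself. Consequently the inserted positions play no role in the deletion applied afterwards: the block $B$ is still a contiguous subword of $Y$ with the same internal relations it had in $W$, so it still does not contradict the chosen deletion rule; its expansion sets the same internal relations; and the transitivity-induced edges created by the expansion (the dotted edges of Definition~\ref{def:deletion}) connect $B$ only to those positions of $A$ and $C$ that already bore a defined relation to $B$, never to $Y'$, since undefined relations do not propagate through transitivity. Hence the surviving positions of $A$, $Y'$, $C$ carry exactly the relations they carry in $V$, and $Y \del V$. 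The symmetric bookkeeping of the index-shift functions $s_{k,m}$ and $s^{-1}_{k-1,m}$ is routine once the positions are matched up as above.

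Finally, I lift the local claim to the full derivation. A derivation $W \insdel^* V$ is a finite sequence of single steps, each labelled $\ins$ or $\del$. Reading the labels as a word over $\{\ins,\del\}$, I repeatedly apply the local claim to any adjacent occurrence of a deletion followed by an insertion, replacing it by an insertion followed by a deletion. Each such swap strictly decreases the number of pairs of positions $p<q$ in the sequence whose labels are $(\del,\ins)$, so the process terminates with a derivation in which all insertions occur before all deletions; taking $Y$ to be the intermediate word reached after the last insertion gives $W \ins^* Y \del^* V$. The main obstacle is the local commutation claim itself, and specifically checking that the deletion's expansion step (with its transitivity-induced edges) behaves identically before and after the insertion; this is exactly what the ``undefined relations do not propagate through transitivity'' observation resolves.
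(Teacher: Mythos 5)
Your proposal is correct and follows essentially the same route as the paper: the same local commutation claim (an adjacent $\del$ followed by $\ins$ can be replaced by $\ins$ followed by $\del$ with the same endpoints), justified by the same key observation that inserted positions have undefined relations to all existing positions and that undefined relations neither block the deletion's pattern match nor participate in the expansion's transitivity propagation, followed by iterating the swap to sort the derivation. Your explicit inversion-counting termination argument is a minor tightening of the paper's ``repeat this process'' step, but the substance is identical.
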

\begin{proof}
First we prove that if $W \insdel^* V$ and there are $W_1, W_2, W_3 \in\mathbb{RW}$ such that $W \insdel^* W_1 \del W_2 \ins W_3 \insdel^* V$ then there is ${W_2}'\in \mathbb{RW}$ such that $W \insdel^* W_1 \ins {W_2}' \del W_3 \insdel^* V$, i.e., for any two consecutive operations of deletion and insertion in a derivation we can swap them so the insertion would be performed before the deletion.

Since  $W_1 \del W_2$, by definition of the operation $\del$ we have that there
is a word $Y \in DEL$ and a subword $Y'$ in the word $W_1$ such that $Y'$ does
not contradict $Y$. Let the word $W_1$ has the length $n$, the subword $Y'$ has
the length $m$ and starts from the position $i$ in the word $W_1$. Then
$|W_2|=n-m$. Since $W_2 \ins W_3$, we have that there is a subword $T$ in $W_3$
and there is $T' \in INS$ such that $T=T'$. Let $T$ starts from the position
$j$ in $W_3$. We construct the word $W_2$ as follows. We take the word $W_1$
and insert in it $T'$ starting from the position $j$ if $j<i$ or  from the
position $j + n$ if $j \geq i$. Then $W_1 \ins {W_2}'$.  Then the word ${W_2}'$
has the scattered subword $Y'$ that does not contradict $Y$ since it was not
changed during the insertion. Then we apply the deletion operation to ${W_2}'$
deleting the subword $Y$ . By the definition of the operation $\ins$ all the
relations between symbols in $W_1$ remain the same after insertion,  and all
symbols that have been inserted have undefined relationships with all other
symbols in the word. By the definition of the operation $\del$, during the
first step of deletion when we "expand" the subword $Y$ to match $Y'$, we can
change the relations between only those symbols which has relations with
symbols from $Y$ that are not undefined. Then all relations between symbols
from subword $T$ and all other symbols of the word after expansion remain
unchanged, i.e. undefined, and all other relations are the same as the
relations between corresponding symbols in $W_3$. Then we have that the result
of deleting $Y$ from ${W_2}'$ is equal to $W_3$.

Thus for every derivation we can change $W_1 \del W_2 \ins W_3$ to $W_1 \ins {W_2}' \del W_3$. By repeating this process for any place in the derivation where $\del$ goes directly before $\ins$,  we obtain the new derivation such that all $\ins$ precede all $\del $, i.e., $W \ins^* Y \del^* V$.
\end{proof}

Now we consider only simple insertion-deletion system from  $I_2D_3 \cup I_3D_2$ , i.e., $S = (INS, DEL)$ such that both sets $INS$ and $DEL$ contain only one rule and either the length of the insertion rule is 2 and the length of deletion rule is 3, or the length of the insertion rule is 3 and the length of deletion rule is 2.

Because of the transitivity of the relation $E$, there are only 2 different
fully defined relational words of length 2 and 5  different  fully defined
relational words of length 3, yielding 10 insertion-deletion systems in both
$I_3D_2$ and $I_2D_3$.
Below are the associated matrices.\\
{\scriptsize
 $M_1^2=\left(
   \begin{array}{cc}
     1 & 0 \\
     0 & 1\\
   \end{array}
 \right)
$,
$M_2^2=\left(
   \begin{array}{cc}
     1 & 1 \\
     1 & 1\\
   \end{array}
 \right)
$,
$M_1^3=\left(
  \begin{array}{ccc}
    1 & 1 & 1 \\
    1 & 1 & 1 \\
    1 & 1 & 1 \\
  \end{array}
\right)$,
$M_2^3=\left(
  \begin{array}{ccc}
    1 & 1 & 0 \\
    1 & 1 & 0 \\
    0 & 0 & 1 \\
  \end{array}
\right)$,\\
$M_3^3=\left(
  \begin{array}{ccc}
    1 & 0 & 0 \\
    0 & 1 & 1 \\
    0 & 1 & 1 \\
  \end{array}
\right)$,
$M_4^3=\left(
  \begin{array}{ccc}
    1 & 0 & 1 \\
    0 & 1 & 0 \\
    1 & 0 & 1 \\
  \end{array}
\right)$,
$M_5^3=\left(
  \begin{array}{ccc}
    1 & 0 & 0 \\
    0 & 1 & 0 \\
    0 & 0 & 1 \\
  \end{array}
\right)$. }

\begin{lemma}\label{lem:2} 
For every simple insertion-deletion system $S\in I_2D_3 \cup I_3D_2$ and every relational word $W$ we have $W\insdel^* \varepsilon$.
\end{lemma}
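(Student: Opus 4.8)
The plan is to use Lemma~\ref{lem:1} to reduce the claim to a pure ``insert first, delete later'' form: it suffices to find, for each $W\in\mathbb{RW}$, a word $Y$ with $W\ins^* Y\del^*\varepsilon$. Since every single deletion removes exactly $|D|$ positions and every single insertion adds exactly $|I|$ positions, and since $\gcd(|I|,|D|)=\gcd(2,3)=1$ in both families, I can first settle the length constraint: choosing the number of insertions $a\ge 0$ appropriately makes $|W|+a|I|$ a multiple of $|D|$, which is necessary for $Y\del^*\varepsilon$. The real work is to arrange the inserted material so that $Y$ can actually be peeled down to $\varepsilon$ by legal deletions.

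The engine of the argument is the behaviour of freshly inserted positions. By Definition~\ref{def:insertion}, every position created by an insertion is in relation $U$ (undefined) with all previously present positions, and positions coming from two different insertions are mutually undefined as well. Consequently any \emph{contiguous} factor all of whose internal pairs are undefined contradicts \emph{no} deletion rule $Y\in DEL$, so it is always deletable after expansion. Moreover such deletions are \emph{clean}: when the expansion step of Definition~\ref{def:deletion} forces a relation onto a pair touching a fresh, externally undefined position, the transitivity propagation only produces pairs one of whose members is that fresh position, and those pairs are erased by the index map $s^{-1}_{k-1,m}$ when the factor is removed. Hence deleting an all-undefined factor never alters the relations among the surviving positions.

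Using this, I would exhibit explicit bulk reductions. For $S\in I_3D_2$ I insert one length-$3$ block $a_1a_2a_3$ at the right end of $W=w_1\cdots w_n$ and delete the three pairs $(w_n,a_1)$, $(w_{n-1},a_2)$, $(w_{n-2},a_3)$ in turn; each is an original--fresh pair, hence undefined and clean, with net effect $W\insdel^* w_1\cdots w_{n-3}$. Iterating drives $|W|$ below $3$, leaving finitely many tiny residual words to finish by hand for each of the ten systems. For $S\in I_2D_3$ the same idea works with originals placed \emph{between} two distinct blocks: the straddling triple consisting of the last symbol of the left block, an original, and the first symbol of the right block has all three internal pairs undefined (two original--fresh, one cross-block), so it is deletable and clean for every $D$ and removes one original at a time; the leftover fresh symbols are then cleared by all-undefined triples obtained by spanning three distinct blocks.

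The step I expect to be the main obstacle is exactly the expansion-induced transitivity in Definition~\ref{def:deletion}: a careless deletion could, via the third sub-case for $R=E$, create new $E$- or $N$-edges among the surviving positions and thereby manufacture a factor that contradicts $D$ later on. The cleanliness observation above neutralises this, but it forces me to always schedule deletions so that at least one endpoint of every expanded pair is an externally undefined fresh position. Guaranteeing that such a schedule exists all the way down (so that the word terminates at $\varepsilon$ rather than getting stuck on a short $D$-contradicting remnant, such as the two $N$-pairs inside an $M_4^3$ block when $D=M_2^2$) is the delicate point; it is handled by the length bookkeeping of the first paragraph together with the insertion of a few extra mutually-undefined blocks to supply enough fresh partners, which at the level of adjacent-pair deletions amounts to realising a non-crossing perfect matching that avoids every $D$-contradicting pair.
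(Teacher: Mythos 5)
Your reduction rests entirely on the ``cleanliness'' claim, and that claim is false. A freshly inserted position is undefined with respect to everything \emph{outside} its own block, but it has \emph{defined} relations to its own block-mates, dictated by $I$. So when you delete the pair $(w_n,a_1)$ and the expansion step forces, say, $(w_n,a_1)\in E$, the transitivity propagation in Definition~\ref{def:deletion} creates relations between the $E$/$N$-classes of $w_n$ (surviving originals) and the class of $a_1$ (the surviving block-mates $a_2,a_3$): these are pairs \emph{both} of whose members survive, not pairs ``one of whose members is the fresh position''. Concretely, take $S=(\{I\},\{D\})\in I_3D_2$ with $I=M_1^3$ (all symbols equal) and $D=M_2^2$ (equal pair), and any $W$ with $(w_{n-1},w_n)\in N^W$. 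After your first deletion $(w_n,a_1)$, the survivors $a_2,a_3$ inherit $w_n$'s relations, in particular $(w_{n-1},a_2)\in N$; your second scheduled deletion $(w_{n-1},a_2)$ now contradicts $D$ and cannot be performed. (The symmetric failure occurs for $I=M_5^3$, $D=M_2^2$ when $w_{n-1}$ and $w_n$ are equal.) The same defect kills the $I_2D_3$ straddling-triple scheme: once one triple is deleted, the neighbouring block's leftover symbol has inherited defined relations to the originals, and the next triple can contradict $D$. Your closing paragraph acknowledges the danger, but the proposed fix (a few extra blocks plus a non-crossing matching ``avoiding every $D$-contradicting pair'') does not meet it: the contradicting pairs are not present when the matching is chosen --- they are manufactured dynamically by each expansion --- so no static matching argument can schedule around them.

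The paper sidesteps exactly this by being less ambitious: it never tries to consume several originals per inserted block. For each of the ten systems it exhibits an explicit gadget deleting a \emph{single} symbol: all intermediate deletions are taken entirely inside freshly inserted material whose relations already match $D$ (so no expansion occurs, hence no propagation), and in the final deletion the original symbol is removed together with a fresh symbol whose surviving $E$/$N$-classes are empty, so the propagation forces nothing among the survivors. Iterating this one-symbol gadget over $W$, and reversing the derivations to obtain the $I_2D_3$ case, yields the lemma. To repair your argument you would have to redo the bulk step in this per-case, one-symbol-at-a-time form --- at which point it becomes the paper's proof.
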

\begin{proof}

First we show that in each $S\in I_3D_2$ we have $V  \insdel^* \varepsilon$ where $|V|=1$.

Let $S=(\{I\}, \{D\})$ where
      {\small
\begin{enumerate}
\setlength\itemsep{-1mm}
  \item $I=\{M_1^3\}$, $D= \{M_1^2\})$. If we take the word $V$, insert the
      word $I$ after the only symbol in $V$,  and  delete two last symbols in
      the result (they are equal), then we obtain the word of length two
      where the relation between symbols is undefined. Then we can delete
      these two symbols as equal and thus get the empty word. Hence we have
      following derivation: $V \insa[1]{I} V_1   \dela[3]{D}  V_2 \dela[1]{D}
      \varepsilon$. The derivations in matrix notation for this case and all
      the following cases can be found in the appendix.
\item   $I=M_1^3$, $D=M_2^2$.
In this case we have $V \insa[0]{I} V_1  \insa[1]{I} V_2 \insa[1]{I} V_3 \dela[7]{D}  V_4 \dela[4]{D} V_5 \dela[3]{D}  V_6 \dela[1]{D} \varepsilon$.\\
\item    $I=M_2^3$, $D=M_1^2$.
In this case we have $V \insa[0]{I} V_1  \insa[2]{I} V_2 \insa[2]{I} V_3 \dela[2]{D}  V_4 \dela[5]{D} V_5 \dela[3]{D}  V_6 \dela[1]{D} \varepsilon$.
\item   $I=M_2^3$, $D=M_2^2$.
 Then we have $V \insa[0]{I} V_1   \dela[1]{D}  V_2 \dela[1]{D} \varepsilon$.
\item   $I=M_3^3$, $D=M_1^2$.
 Then we have $V \insa[0]{I} V_1   \dela[1]{D}  V_2 \dela[1]{D} \varepsilon$.
\item   $I=M_3^3$, $D=M_2^2$.
Then we have $V \insa[0]{I} V_1   \dela[2]{D}  V_2 \dela[1]{D} \varepsilon$.
\item   $I=M_4^3$, $D=M_1^2$.
Then we have $V \insa[0]{I} V_1   \dela[2]{D}  V_2 \dela[1]{D} \varepsilon$.
\item   $I=M_4^3$, $D=M_2^2$.
Then we have $V \insa[0]{I} V_1   \dela[3]{D}  V_2 \dela[1]{D} \varepsilon$.
\item   $I=M_5^3$, $D=M_1^2$.
Then we have $V \insa[0]{I} V_1  \insa[0]{I} V_2 \insa[2]{I} V_3 \dela[6]{D}  V_4 \dela[5]{D} V_5 \dela[2]{D}  V_6 \dela[1]{D} \varepsilon$.
\item   $I=M_5^3$, $D=M_2^2$. Then we have $V \insa[0]{I} V_1   \dela[1]{D}
    V_2 \dela[1]{D} \varepsilon$.
\end{enumerate}
}

Thus in each $S\in I_3D_2$ we have $V  \insdel^* \varepsilon$ .

Since we can delete any isolated symbol, we can apply this sequence of rules to each symbol of the relational word $W$ and thus we can delete the whole word, i.e.,  for each $S\in I_3D_2$ we have that $W\insdel^* \varepsilon$.

It is easy to see that in each $S\in I_2D_3$ where $S=(\{I\}, \{D\})$ the same sequence of rules that was used in $S=(\{D\}, \{I\})$ to delete the isolated symbol but taken backwards adds to a relational word an isolated symbol. Then we can use the following strategy: we add to the end of the relational word $W$ two isolated symbols and then apply to the last three symbols of the result the deletion rule,  thus we obtain $W$ shortened by one symbol. We repeat this process and delete all symbols in $W$ one by one.

Thus we we have that for every   $S\in I_2D_3 \cup I_3D_2$ and every relational word $W$ we have $W\insdel^* \varepsilon$.
\end{proof}
\begin{corollary}\label{cor:3} 
Let $S$ be a simple insertion-deletion system such that $S\in I_2D_3 \cup I_3D_2$. For every fully defined relational words $V$ and $W$ we have  $V \insdel^* W$ iff there is $W'\in \mathbb{RW}$ such that $W$ is a fully defined scattered subword of $W'$ and $V \insdel^* W'$.
\end{corollary}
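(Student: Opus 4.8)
The plan is to treat the two implications separately, with essentially all the work in the backward direction. For the forward implication I would simply take $W'=W$: since $W$ is fully defined by hypothesis and every relational word is a scattered subword of itself, $W$ is a fully defined scattered subword of $W'=W$, while $V \insdel^* W'$ holds by assumption. Thus this direction is immediate.

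For the backward implication, assume $W'$ with $W$ a fully defined scattered subword of $W'$ and $V \insdel^* W'$. As $\insdel^*$ is a reflexive and transitive closure, it suffices to establish $W' \insdel^* W$, since composing with $V \insdel^* W'$ then gives $V \insdel^* W$. My strategy is to delete the positions of $X^{W'}\setminus X^W$ one at a time, each time without disturbing the relations among the positions that remain. After all of them are removed the surviving positions are exactly $X^W$; because $W$ is a scattered subword of $W'$ and is fully defined, the relations $W'$ induces on $X^W$ coincide with those of $W$, so the resulting word is precisely $W$.

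The heart of the argument is a single-symbol deletion that leaves every other relation intact, for which I would reuse the mechanism from the proof of Lemma \ref{lem:2}: to remove one target position I insert an occurrence of the rule $I$ immediately next to it and then apply $D$ the appropriate number of times, so that the net effect deletes exactly the inserted helper symbols together with the target. The decisive point is that, by the definition of $\ins$, every freshly inserted helper symbol has an undefined relation to all pre-existing positions; in particular no helper is $E$-related to any surviving symbol. Each deletion step therefore removes only helper symbols together with at most one original position (the target). Since the transitivity-closure propagation of Definition \ref{def:deletion} relates two survivors only when two distinct removed positions each carry a survivor $E$-partner, and at most one removed position (the target) ever has such a partner, this propagation never fires for a pair inside $X^W$. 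Hence the first clause of Definition \ref{def:deletion} preserves all $E$- and $N$-relations among survivors, no new relation is introduced, and undefined relations among survivors stay undefined.

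The step I expect to be the main obstacle is precisely this relation-preservation check: one must rule out that the expansion phase of a deletion silently forces a new equality or inequality between two positions of $X^W$ through the equivalence closure of $E$. The argument above localizes the danger to the propagation clauses of Definition \ref{def:deletion} and neutralizes it using the fact that helper symbols never acquire genuine $E$-partners among the survivors, so each propagation can only relate a deleted position to another deleted position. Once single-symbol deletion is verified, I would finish by iterating it over $X^{W'}\setminus X^W$ in any order — the sub-derivations are independent, each being confined to its own target and its temporary helper block — thereby obtaining $W'\insdel^* W$ and hence $V \insdel^* W$.
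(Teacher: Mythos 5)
Your proposal is correct and takes essentially the same route as the paper: the entire weight rests on the fact, extracted from Lemma~\ref{lem:2}, that any single symbol can be deleted (via inserting helper symbols next to it and then applying $D$) without disturbing the relations among the remaining positions, and this is iterated over $X^{W'}\setminus X^W$ to get $W'\insdel^* W$, hence $V \insdel^* W' \insdel^* W$. Your forward direction ($W'=W$) is the obvious trivial choice, and your relation-preservation analysis spells out in more detail what the paper simply asserts when invoking Lemma~\ref{lem:2}.
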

\begin{proof}
Let $V \insdel^* W$. Since every relation word $W$ has a fully defined subword $W'$ of length 1 and by Lemma~\ref{lem:2} for every $W$ we can delete all the symbols of $W$ but one, then $W \insdel^* W'$.

Let $W\in \mathbb{RW}$  be a fully defined scattered subword of $W'$ and $V \insdel^* W'$. Since by Lemma~\ref{lem:2} we always can delete any symbol from the relation word without changing the relations between all other symbols, we have $V \insdel^* W' \insdel^* W$.
\end{proof}

In the next lemma we analyze the behavior of insertion-deletion systems in
which all symbols in both insertion and deletion rules are equal. We prove that
a relational word could be derived from the empty word if and only if all its
symbols are equal.

\begin{lemma} \label{lem:4} 
Let $S=(\{I\}, \{D\})$ be a simple insertion-deletion system such that $S\in I_2D_3 \cup I_3D_2$ and for every $x,y \in X^I$ we have $(x,y)\in E^I$, for every $x',y' \in X^D$ we have $(x',y')\in E^D$, i.e., all symbols in both insertion and deletion rules are equal. For every relational word $V$ we have $V\in FDL(S)$ iff for every $x,y \in X^V$ we have $(x,y)\in E^V$.
\end{lemma}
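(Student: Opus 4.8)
The plan is to read $V\in FDL(S)$ as ``$\varepsilon\insdel^* V$ and $V$ is fully defined'' and to prove the two implications separately. In both systems the hypothesis pins down the rules: for $S\in I_3D_2$ we have $I=M_1^3$, $D=M_2^2$, and for $S\in I_2D_3$ we have $I=M_2^2$, $D=M_1^3$; in either case both rules are all-equal, which is exactly what I want to exploit.

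For the forward implication I would argue that \emph{no inequality can ever be created}, i.e. that $N^W=\emptyset$ is an invariant of every derivation $\varepsilon\insdel^* W$, by induction on the number of steps. The base case is $N^\varepsilon=\emptyset$. For an insertion step, Definition~\ref{def:insertion} only transports $N^W$ forward through $s_{k,m}$ and adds the relations of $I$; since $I$ is all-equal, $N^I=\emptyset$, so $N$ stays empty. For a deletion step I inspect Definition~\ref{def:deletion}: the only clauses that place a pair into $N^V$ are the first bullet applied with $R=N$ (which transports $N^W$ backward) and the transitivity sub-bullet of the second bullet; both need a pre-existing pair of $N^W$ as witness, and because $D$ is all-equal the outer alternative $(i,j)\in N^Y$ never occurs. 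Hence with $N^W=\emptyset$ no new inequality appears and $N^V=\emptyset$. If now $V\in FDL(S)$ then $U^V=\emptyset$, so every pair lies in $E^V\cup N^V=E^V$, i.e. all symbols of $V$ are equal. I expect the only delicate point here to be checking that the merge (transitivity) clause of $\del$ genuinely cannot fire without an $N^W$ witness.

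For the converse I would give a constructive reachability argument for the unique all-equal fully defined word $E_n$ of each length $n$. Since $E$ is an equivalence and, along any derivation from $\varepsilon$, $N=\emptyset$, it is convenient to picture a reachable word as a partition of its positions into $E$-classes with all inter-class relations undefined: an insertion adds a fresh class of size $|I|$, while a deletion selects a contiguous length-$|D|$ block, forces (by expansion) the classes that block meets to merge into one, and then removes the block's $|D|$ symbols from that merged class. I would prove $\varepsilon\insdel^* E_n$ for all $n$ by induction, handling $E_0=\varepsilon$, $E_1$, $E_2$ by a few explicit insertions and one merging deletion.

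The easy inductive step is $I_3D_2$: from $E_n$ insert $M_1^3$ at the end to get two adjacent classes of sizes $n$ and $3$, then delete a length-$2$ block straddling the boundary, whose expansion merges them into a single class of size $n+3-2=n+1$. I expect the main obstacle to be $I_2D_3$, where each deletion removes $3$ symbols but each insertion adds only $2$, so inserting at the end and deleting merely shrinks the word and a single contiguous deletion touching two adjacent blocks over-deletes. The idea that breaks this is \emph{interleaving}: a class need not occupy a contiguous range, because a later insertion may be placed inside an earlier block. Concretely, from $E_n$ I insert one pair at the end and a second pair inside the first, arranging three classes of sizes $n,2,2$ so that three consecutive positions carry one symbol of each; deleting that length-$3$ block merges all three classes while removing only one symbol from each, leaving a single class of size $(n+2+2)-3=n+1$, namely $E_{n+1}$. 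The remaining care is bookkeeping: verifying that after this merging deletion no undefined pair survives, so that the result is genuinely fully defined and equals $E_{n+1}$.
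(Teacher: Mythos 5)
Your proposal is correct and takes essentially the same approach as the paper: the forward direction is the same invariant argument (all-equal rules can never create an inequality, so a fully defined reachable word must be all-equal), and the converse is the same induction on length, including the straddling two-symbol deletion for $I_3D_2$ and the key interleaved double insertion followed by a merging three-symbol deletion for $I_2D_3$. The only cosmetic difference is that the paper routes the forward direction through Lemma~\ref{lem:1} (normalizing derivations to insertions-then-deletions) while you do a direct induction on derivation steps.
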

To prove Lemma~\ref{lem:4} we first  show that by definitions of  $\ins$ and  $\del$  for every fully defined relational word $V$ we have if $V\in FDL(S)$, then all the symbols in $V$ are equal.
Then we prove by induction on the length of the word  that for every
 $n \in \mathbb{N}$ we have  $\varepsilon  \insdel^* V$  where $V$ is a fully defined relational word of length $n$ such that for every $x,y\in X^V$ we have $(x,y)\in E^V$.
Thus for every relational word $V$ we have $V\in FDL(S)$ iff for every $x,y \in X^V$ we have $(x,y)\in E^V$. The full proof of the lemma can be found in the appendix.


In the next lemma we show that if $S\in I_2D_3 \cup I_3D_2$ and either $I$ or $D$ contains unequal symbols then there is a constant $k\in \mathbb{N}$ such that for every relational word $V$ that could be obtained in $S$ from the empty word, the length of the longest fully defined scattered subword of $V$ is not larger then $k$.

\begin{lemma} \label{lem:5} 
Let $S=(\{I\}, \{D\})$ be a simple insertion-deletion system such that $S\in I_2D_3 \cup I_3D_2$ and there are $x,y \in X^I$ such that  $(x,y)\in N^I$, or there are $x',y' \in X^D$ such that  $(x',y')\in N^D$. Then there is $k\in \mathbb{N}$ such that for every relational word $V$ if $V\in L(S)$, then $maxFD(V)\leq k$.
\end{lemma}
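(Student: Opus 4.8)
The plan is to put the derivation into the insertion-first normal form of Lemma~\ref{lem:1} and then bound, by a book-keeping of how many symbols each step consumes, the size of the largest fully defined ``clique'' the deletion phase can assemble. First I would invoke Lemma~\ref{lem:1} to rewrite any derivation $\varepsilon \insdel^* V$ (the only axiom being $\varepsilon$) as $\varepsilon \ins^* Y \del^* V$. Since each insertion attaches a fresh copy of $I$ whose positions are left undefined with respect to everything already present, the intermediate word $Y$ is a disjoint union of copies of $I$: every such \emph{block} is fully defined internally, has only undefined relations to the other blocks, and has length $|I|\le 3$. Thus $maxFD(Y)=|I|$, and the whole question becomes how much the deletion phase $Y\del^* V$ can enlarge the longest fully defined scattered subword.

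Next I would separate the two effects an application of $\del$ can have on the surviving relations. Its expansion step may \emph{merge} two classes (where the corresponding positions of $D$ are equal), propagating the equality and the inherited inequalities to the survivors, and/or \emph{resolve} two classes as unequal (where those positions of $D$ are unequal); afterwards all $|D|$ anchored positions are deleted. Two invariants follow. \emph{(a) $maxE$ is bounded:} if $I=M_1^3$ then, the scheme not being excluded, $D=M_1^2$, which has no equal pair, so no merge ever occurs and classes keep size $3$; otherwise every block-class has size $\le 2$ and, a merge deleting one member of each class it combines, merged classes keep surviving size $\le 2$. Either way $maxE(V)\le 3$. \emph{(b) inequalities among survivors are witnessed:} a surviving inequality between two classes is either already present between two of their ancestors inside one block, or it was installed by a resolution, which deletes one member from each of the two separated classes. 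Crucially, a merge never manufactures an inequality between survivors; it only relocates one that an ancestor already carried.

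Then I would run the charging argument. A fully defined scattered subword of $V$ is a union of $c$ pairwise unequal classes $C_1,\dots,C_c$, each with at least one surviving representative; by (a) each contributes $\le 3$ to the length, so it suffices to bound $c$. Writing $n_i$ for the number of survivors of $C_i$, $p_i$ for the number of ``deficient'' (small) block-parts merged into it, and $r_i$ for the number of its members spent on resolutions, a survivor count gives $1\le n_i$ together with $p_i+r_i\le 2$, so each class can absorb only boundedly many deficient parts and resolutions. Every one of the $\binom{c}{2}$ clique-edges must be witnessed as in (b), a block-witness depositing a deficient part into one endpoint and a resolution-witness spending a member of each endpoint; hence $\binom{c}{2}=O(\sum_i (p_i+r_i))=O(c)$, forcing $c$ below an absolute constant $k_0$. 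Consequently $maxFD(V)=\sum_i n_i=O(1)=:k$, as required. Carrying the same count through the finitely many schemes fixes the constants (for instance $I=M_1^3,\,D=M_1^2$ gives $maxFD\le 4$ and $I=M_5^3,\,D=M_2^2$ gives $maxFD\le 3$).

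The main obstacle is the book-keeping for the length-three deletion rules $M_2^3,M_3^3,M_4^3$: there one application of $\del$ both merges (its equal positions) and resolves (its unequal position) and removes three symbols at once, so invariants (a) and (b) must be re-established for a single step that simultaneously creates an equality and an inequality. Making the survivor count tight for that combined step, and verifying carefully that a merge really only inherits, never creates, an inequality between two surviving symbols, is where the real work lies; the reduction above together with a finite case inspection handles everything else.
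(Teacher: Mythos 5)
Your proposal is correct in outline but takes a genuinely different route from the paper's. Both proofs begin the same way: Lemma~\ref{lem:1} puts the derivation into the form $\varepsilon \ins^* Y \del^* V$, and $Y$ is a disjoint union of copies of $I$, so $maxE(Y)=maxE(I)$ and $maxFD(Y)=maxFD(I)=|I|$. From there the paper argues \emph{locally}: it splits into three cases according to the shape of $D$ (no equal pair, all equal, mixed), writes for each case a per-step recurrence bounding $maxE(V_{i+1})$ and $maxFD(V_{i+1})$ in terms of the same quantities for $V_i$ (e.g. $maxFD(V_{i+1})\leq \max(maxFD(V_i), |D|\cdot(maxE(V_i)-1))$ when $D$ has no equal pair), and closes by induction along the deletion sequence, obtaining explicit constants $k\leq 4$. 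You instead argue \emph{globally}: every inequality edge of a final fully defined clique must be witnessed either by a block edge (which forces a size-one part into an endpoint class) or by a resolution (which costs deleted members), and the per-class budget $n_i+p_i+r_i\leq 2$ makes the supply of witnesses linear in the number $c$ of classes while the clique demands $\binom{c}{2}$ of them, forcing $c=O(1)$. Your constants are looser, but the scheme is uniform (one counting argument instead of three case-specific recurrences), and it confronts directly the point that the paper's recurrences leave implicit --- why a large clique cannot be assembled gradually by combining relations created at different deletion steps.

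Two of your intermediate claims need repair, precisely at the spot you flagged. First, for $D\in\{M_2^3,M_3^3,M_4^3\}$ a single deletion step merges via its equal pair and resolves against its third matched position; the resulting inequality then costs a member of only \emph{one} endpoint class, since the deleted symbols on the other side are exactly the merged pair, already counted as merge cost. So ``a resolution-witness spending a member of each endpoint'' is false for these rules, and such edges must be charged to a single endpoint; the bound $\binom{c}{2}\leq \sum_i(p_i+r_i)=O(c)$ survives, only with a worse constant. Second, ``merged classes keep surviving size $\leq 2$'' ignores three-way merges; these occur only when $D=M_1^3$, which under the hypothesis of the lemma forces $I=M_1^2$, so all classes are singletons there and the claim holds trivially. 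With these two repairs and the finite case inspection you describe, your plan goes through.
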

\begin{proof}
Let $S=(\{I\}, \{D\})$ be a simple insertion-deletion system such that $S\in I_2D_3 \cup I_3D_2$ and $V$ be a relational word such that $V\in L(S)$, i.e., $\varepsilon  \insdel^* V$. Then by Lemma~\ref{lem:1} we have that there are $V_0, V_1, V_2,..\in \mathbb{RW}$ such that $\varepsilon  \ins^* V_0 \del V_1 \del V_2 \del ...\del V$.

By the definition of the insertion relation we have that for every relational words $X$ and $Y$ if $X\ins^* Y$, then
$maxE(Y)=max (maxE(X), maxE(I))$,
$maxFD(Y)=max( maxFD(X), maxFD(I))$.
Then  it follows that
$maxE(V_0)= maxE(I)$,
$maxFD(V_0)=maxFD(I)=|I|$.

Now we consider the relational word $D$. Since either $I$ or $D$ contains
unequal symbols, there are three cases for $D$: (1) there are no equal symbols
in $D$; (2) all symbols in $D$ are equal; (3) $D$ contains both equal and
unequal symbols.


It can be shown by induction that
\begin{itemize}
\item in Case 1 there is a constant $k= max (|I|, |D|\cdot (maxE(I)-1))$ such that for every $V_i$, $i\geq 1$ we have  $maxFD(V_i)\leq max (|I|, |D|\cdot (maxE(I)-1))$. Since $S\in I_2D_3 \cup I_3D_2$, it is obvious that $maxFD(V)\leq 4$.
\item  in Case 2 for every $V_i$ we have $maxFD(V_i)\leq |I|$. Since $S\in I_2D_3 \cup I_3D_2$, we have $|I|\leq 3$, then $maxFD(V)\leq 3$.
\item  in Case 3 for every $V_i$ we have $maxFD(V_i)\leq 3$, then  $maxFD(V)\leq 3$.
\end{itemize}

The full proof of the lemma can be found in the appendix.
\end{proof}


\begin{corollary} \label{cor:6} 
If $S=(\{I\}, \{D\})$ be a simple insertion-deletion system such that $S\in I_2D_3 \cup I_3D_2$ and there are $x,y \in X^I$ such that  $(x,y)\in N^I$, or there are $x',y' \in X^D$ such that  $(x',y')\in N^D$, then the set $FDL(S)$ is finite.
\end{corollary}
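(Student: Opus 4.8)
The plan is to deduce the finiteness of $FDL(S)$ directly from the length bound supplied by Lemma~\ref{lem:5}, the only real input being the observation that for a \emph{fully defined} word the quantity $maxFD$ coincides with the length.

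First I would record the inclusion $FDL(S)\subseteq L(S)$, which is immediate from the definitions of $FDL(S)$ and $L(S)$ together with $\mathbb{FDRW}\subseteq\mathbb{RW}$. Since the hypotheses of the present statement are exactly those of Lemma~\ref{lem:5}, that lemma applies: there is a constant $k\in\mathbb{N}$ such that $maxFD(V)\le k$ for every $V\in L(S)$, and in particular for every $V\in FDL(S)$.

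The key step is then to translate this $maxFD$-bound into a bound on the length. For any fully defined relational word $V$ we have $U^V=\emptyset$, so every pair of its positions lies in $E^V\cup N^V$. Hence $V$ is a fully defined scattered subword of itself (taking $X^W=X^V$ in the definition of scattered subword), and no scattered subword can exceed $V$ in length, so $maxFD(V)=|V|$. Combined with the previous step this yields $|V|=maxFD(V)\le k$ for every $V\in FDL(S)$; that is, all words in $FDL(S)$ have length at most $k$. Finally I would count: a fully defined relational word of length $n$ is determined, up to equality of relational words, by the equivalence relation $E^V$ on the totally ordered set $\{1,\dots,n\}$ (as $N^V$ is then its complement within $\{1,\dots,n\}^2$), i.e.\ by a partition of an $n$-element set; there are only finitely many such, namely the Bell number $B_n$. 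Thus the number of fully defined relational words of length at most $k$ is bounded by $\sum_{n=0}^{k}B_n<\infty$, and since $FDL(S)$ is contained in this finite set, it is finite.

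There is no genuine obstacle here: the entire content is the identity $maxFD(V)=|V|$ valid for fully defined $V$, which converts Lemma~\ref{lem:5}'s bound on $maxFD$ into a uniform length bound, after which finiteness follows from the elementary fact that only finitely many partitions exist on sets of bounded size.
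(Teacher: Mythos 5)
Your proof is correct. It rests on the same key input as the paper's --- the bound $maxFD(V)\le k$ from Lemma~\ref{lem:5} --- but the step that converts this bound into finiteness of $FDL(S)$ is organized differently, and in fact more economically. The paper invokes Corollary~\ref{cor:3} (which itself depends on Lemma~\ref{lem:2}) to identify $FDL(S)$ with the set of fully defined scattered subwords of words in $L(S)$, whose lengths are bounded by $k$. You bypass Corollary~\ref{cor:3} entirely: you use only the trivial inclusion $FDL(S)\subseteq L(S)$ together with the observation that $maxFD(V)=|V|$ whenever $V$ is fully defined (a fully defined word is a scattered subword of itself, instantiating the paper's argument with $W'=W$), which immediately gives $|V|\le k$ for every $V\in FDL(S)$. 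You also make explicit the counting that the paper leaves implicit: a fully defined relational word of length $n$ is determined by the equivalence relation $E^V$ (with $N^V$ forced to be its complement), so there are at most $\sum_{n=0}^{k}B_n$ such words, where $B_n$ is the $n$-th Bell number. The net effect is a self-contained argument with strictly weaker dependencies, while the paper's phrasing has the advantage of reusing machinery (Corollary~\ref{cor:3}) that it needs anyway in the decidability proof of Theorem~\ref{th:7}.
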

\begin{proof}
It follows from Lemma~\ref{lem:5} that for every $S\in I_2D_3 \cup I_3D_2$ the number of different fully defined scattered subwords in the set of all words that could be obtained from the empty word in $S$ is finite since the length of such subwords is bounded by $k$. Then by Corollary~\ref{cor:3} the set $FDL(S)$ of all fully defined relational words that could be obtained in $S$ is finite.
\end{proof}

\begin{theorem}\label{th:7} 
Given a simple  insertion-deletion system $S\in I_2D_3 \cup I_3D_2$ and fully defined relational word $V$, it is decidable, whether  $V\in L(S)$, i.e., whether  $\varepsilon  \insdel^* V$.
\end{theorem}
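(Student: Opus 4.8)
The plan is to decide $\varepsilon\insdel^* V$ by a dichotomy on the two rules $I$ and $D$, reducing each branch to a result already available. Since $V$ is fully defined we have $V\in L(S)$ iff $V\in FDL(S)$, so it suffices to decide membership in $FDL(S)$. First I would run the (trivial) syntactic test on the fixed-length rules $I$ and $D$ deciding whether both of them consist of pairwise equal symbols.

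If both rules are all-equal, Lemma~\ref{lem:4} applies directly: $V\in FDL(S)$ holds iff $(x,y)\in E^V$ for every $x,y\in X^V$. This is read off the matrix $M^V$, so the all-equal case is immediately decidable, even though $FDL(S)$ is infinite here (it contains one all-equal word of each length).

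Otherwise at least one of $I,D$ contains a pair of unequal symbols, and I would appeal to Corollary~\ref{cor:6}: the set $FDL(S)$ is finite. More precisely, Lemma~\ref{lem:5} yields a computable bound $k\le 4$ with $maxFD(V')\le k$ for every $V'\in L(S)$; since $V$ is fully defined, $maxFD(V)=|V|$, so if $|V|>k$ we reject at once, and otherwise $V$ ranges over an explicit finite set of candidate fully defined words. Deciding whether $V$ lies in the finite set $FDL(S)$ is therefore the only remaining task.

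The step that needs care — and which I expect to be the main obstacle — is turning the mere finiteness of $FDL(S)$ into an effective, uniform procedure, i.e.\ actually computing $FDL(S)$ from $S$. By Lemma~\ref{lem:1} any witnessing derivation can be normalised to $\varepsilon\ins^* Y\del^* V$, but $Y$ may be arbitrarily long: Lemma~\ref{lem:5} bounds $maxFD$ of the intermediate words, not their total length, since a relational word can carry unboundedly many positions that are pairwise undefined. I would close this gap with a pruning argument. Tracing positions through the deletions, each defined pair of $V$ either joins two positions of one inserted copy of $I$ or is created by the expansion step of some $\del$; as $V$ has at most $\binom{k}{2}$ defined pairs, only boundedly many inserted blocks and expansion events are \emph{relevant}, whereas any copy of $I$ that is inserted and later fully deleted without contributing a surviving relation may be omitted from the derivation together with the deletions that erased it (the freedom to reorganise the remaining deletions being supplied by Lemma~\ref{lem:2}). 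This caps the number of insertions, hence $|Y|$ and the derivation length, by a computable function of $S$ and $V$, so a bounded breadth-first search over $\ins$ and $\del$ decides reachability. Alternatively, since $I_2D_3\cup I_3D_2$ comprises only finitely many systems and each associated $FDL(S)$ is the fixed finite set described above, a decision algorithm exists outright.
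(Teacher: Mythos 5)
Your proposal follows the paper's own structure almost exactly: the same dichotomy on whether $I$ and $D$ consist only of equal symbols, Lemma~\ref{lem:4} disposing of the all-equal case, and Lemma~\ref{lem:5} with Corollary~\ref{cor:6} reducing the remaining case to membership in a finite set. Where you genuinely diverge is in how that second case is closed. The paper simply asserts that one can ``obtain all the words in $FDL(S)$ in finite time by building the derivation tree''; you correctly point out that this is not immediate, since the derivation tree has unbounded depth (intermediate words can carry arbitrarily many mutually undefined positions --- Lemma~\ref{lem:5} bounds $maxFD$, not length), so finiteness of $FDL(S)$ does not by itself give an effective enumeration. Of your two repairs, the non-constructive one is valid and suffices for the theorem as stated: $I_2D_3\cup I_3D_2$ contains finitely many schemes, for each of them $FDL(S)$ is a fixed finite set of words of length at most $4$, and membership in a fixed finite set is decidable, so an algorithm exists even though this argument does not exhibit it. That is logically cleaner than the paper's unsupported effectiveness claim, at the price of yielding no procedure to actually compute $FDL(S)$. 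The pruning argument, by contrast, is not yet a proof: a deletion step erases a contiguous block that may mix positions from several inserted copies of $I$, and its expansion step can create relations among surviving positions by transitivity through positions of copies that are themselves entirely erased, so ``omit a copy together with the deletions that erased it'' is not a well-defined surgery on a derivation; moreover, invoking Lemma~\ref{lem:2} to reorganise the leftover deletions reintroduces insertions whose number you would then need to bound all over again. If one wants an effective, uniform procedure rather than bare decidability, that bookkeeping is exactly the work that remains --- both in your sketch and, in fairness, in the paper's own proof.
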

\begin{proof}
Let $V$ be a fully defined relational word and $S=(\{I\}, \{D\})$ be a simple  insertion-deletion system such that  $S\in I_2D_3 \cup I_3D_2$. Then there are 2 cases:
\begin{enumerate}
  \item All symbols in both insertion and deletion rules are equal, i.e., for every $x,y \in X^I$ we have $(x,y)\in E^I$ and for every $x',y' \in X^D$ we have $(x',y')\in E^D$;
  \item Either $I$ or $D$ contains unequal symbols, i.e., there are $x,y \in X^I$ such that  $(x,y)\in N^I$, or there are $x',y' \in X^D$ such that  $(x',y')\in N^D$.
\end{enumerate}
In the first case by Lemma~\ref{lem:4} we have that $V\in L(S)$ iff for every $x,y \in X^V$ we have $(x,y)\in E^V$. Then it is obvious that it is decidable, if $V\in L(S)$.

In the second case by Lemma~\ref{lem:5} and Corollary~\ref{cor:6} we have that the set $L(S)$ is finite and there is a constant $k$ that depends only on parameters of $S$ such that each word in $FDL(S)$ is not longer than $k$. Then we can obtain all the words in $FDL(S)$ in finite time by building the derivation tree.
\end{proof}


\section{Universality}

In this section we show that if the length of the inserted and deleted words
can be large, then corresponding insertion-deletion systems can produce a
coding of any recursively enumerable language. We will abuse the terminology
and we will call a function $f:\mathcal A^*\to \RW$ (where $\mathcal A$ is an
alphabet) a morphism, if it satisfies $f(uv)=f(u)f(v)$. We will further
restrict this notion and consider only those morphisms having $f(a)\in \FDRW$,
for any $a\in\mathcal A$. Since any $w\in \FDRW$ can be uniquely identified by
a string, we will use such a representation to define corresponding morphisms.
Notice, that $f(u)\not\in \FDRW$ for $|u|>1$.

\begin{theorem}\label{thm:univ}
For any recursively enumerable language $\mathcal L$ over a finite alphabet
$\mathcal A$ and for any (possibly infinite) alphabet $\mathcal V$ with
$|\mathcal V|>2$, there exists an insertion-deletion system over relational
words $S=(\mathcal V,INS,DEL,A)$ and a morphism $h$  such that $\mathcal
L=h^{-1}(L(S))$.
\end{theorem}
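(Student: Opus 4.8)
The plan is to simulate a type-0 grammar generating $\mathcal{L}$, exploiting the two-phase structure already available on the insertion-deletion side. I would start from a grammar $G$ for $\mathcal{L}$ in a Geffert-type normal form, in which every terminating derivation splits into a purely context-free \emph{generation phase} followed by a \emph{cancellation phase} that uses only the two length-reducing rules $AB\to\varepsilon$ and $CD\to\varepsilon$. This is exactly the shape that Lemma~\ref{lem:1} imposes on $S$: every derivation can be rearranged so that all insertions precede all deletions. Accordingly I would make the insertion rules of $S$ simulate the context-free rules of $G$ (the generation phase) and the deletion rules simulate the two cancellation rules (the cancellation phase), so that Lemma~\ref{lem:1} later lets me reduce the correctness analysis to derivations of the normal form $A \ins^* Y \del^* V$.

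Next I would fix a rigid block encoding: a fully defined relational word $\mathrm{code}(X)$ for each grammar symbol $X$ and for a few auxiliary markers, chosen pairwise distinguishable and realizable as actual strings over any $\mathcal{V}$ with $|\mathcal{V}|>2$ (for instance variable-length patterns using only two equivalence classes). The morphism $h$ is the induced block morphism on terminals, $h(a_1\cdots a_n)=\mathrm{code}(a_1)\cdots\mathrm{code}(a_n)$; consistently with the requirement stated before the theorem, $h(w)$ is fully defined on each block but has all inter-block relations undefined, which is precisely what the insertion semantics produces. The axiom set $A$ would encode the start configuration; each context-free rule of $G$ becomes an insertion rule inserting the block-encoding of its right-hand side, with markers recording where a meaningful insertion has occurred, while the rules $AB\to\varepsilon$ and $CD\to\varepsilon$ become deletion rules whose relational words are the concatenated codes $\mathrm{code}(A)\mathrm{code}(B)$ and $\mathrm{code}(C)\mathrm{code}(D)$, supplemented by cleanup deletions removing the markers at the end. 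Since these codes span several positions the resulting rules are long, matching the hypothesis that long insertion and deletion words are permitted.

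For correctness I would argue both inclusions separately. For $\mathcal{L}\subseteq h^{-1}(L(S))$, given $w\in\mathcal{L}$ and a $G$-derivation of $w$, I replay the generation phase with the corresponding insertions and the cancellation phase with the corresponding deletions, arriving at exactly the relational word $h(w)$. For $h^{-1}(L(S))\subseteq\mathcal{L}$, I use Lemma~\ref{lem:1} to assume the derivation producing $h(w)$ inserts first and deletes afterwards; the inserted word then encodes a sentential form of $G$'s generation phase, and the subsequent deletions can only cancel coded $AB$/$CD$ pairs and markers. The key point is that the equality/inequality structure of the codes serves as implicit context, so a deletion rule's pattern matches a contiguous pair of blocks only at intended sites and no spurious cancellation is possible; hence any derivation ending in the clean terminal word $h(w)$ mirrors a legal $G$-derivation of $w$. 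Because we only need the preimage $h^{-1}(L(S))$, malformed intermediate words that never reduce to a clean terminal encoding are harmless and can be ignored.

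The main obstacle will be taming the context-freeness of the two operations: insertion may place a block anywhere and deletion may match any contiguous pattern, so the hard part is designing the codes and markers rigidly enough that, first, the expansion-and-propagation step of the deletion relation in Definition~\ref{def:deletion} never merges equivalence classes across blocks in an unintended way, and second, no illegitimate sequence of insertions and deletions can fabricate the encoding $h(w)$ of a string $w\notin\mathcal{L}$. Verifying that equality propagation during deletion stays confined to the block being removed, and that the only derivations reaching a clean terminal word are the simulating ones, is the technical heart of the argument.
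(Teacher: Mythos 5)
Your high-level plan (encode a string-level universal device block-by-block into relational words, and let $h^{-1}$ filter out malformed words) matches the paper's skeleton, but the device you chose to encode opens a gap that your proposal does not close. The paper starts from a \emph{context-free insertion-deletion system over strings} of size $(3;2)$ or $(2;3)$, which is already known to be computationally complete \cite{Margenstern}; lifting it to relational words then only requires the letter-to-code morphism ($INS=c(INS')$, $DEL=c(DEL')$, $A=c(A')$). You instead start from a Geffert normal form grammar and assert that ``each context-free rule of $G$ becomes an insertion rule inserting the block-encoding of its right-hand side.'' This does not simulate the rule: a rule $S\to uSv$ \emph{rewrites} $S$, whereas a context-free insertion neither consumes the existing occurrence of $S$ nor can it be forced to land adjacent to it. From $\mathrm{code}(u_1Sv_1)$ you may insert $\mathrm{code}(uSv)$ in the middle of $v_1$ and, after your ``cleanup'' deletion of the stale $S$, reach encodings of strings that $G$ never generates. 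Repairing this requires paired marker deletions keyed to adjacency (insert the code of $\bar S\,u S v$, delete $\mathrm{code}(S)\mathrm{code}(\bar S)$ only when adjacent, and argue that wrong-position insertions leave irrecoverable garbage) --- i.e., essentially re-proving the string-level completeness theorem that the paper simply imports. Note also that this forces deletions to interleave with the generation phase, so your intended correspondence between Lemma~\ref{lem:1} and Geffert's two phases evaporates; Lemma~\ref{lem:1} still applies, but it cannot do the work you assign to it (the paper's universality proof does not use it at all).

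The second gap is the one you yourself call ``the technical heart'' and then leave open: showing that no derivation can fabricate a canonical word $h(w)$ with $w\notin\mathcal L$ by inserting \emph{inside} code blocks and exploiting the expansion step of Definition~\ref{def:deletion}. In the paper this is precisely the content of the Claim inside the proof of Theorem~\ref{thm:univ}: with the concrete codes $c(\mathfrak{a}_i)=(ab)^K a^i(ba)^K$, $K>n+2$, any insertion off a codeword boundary leaves patterns such as $b(ab)^{K-1}$, and every attempt to complete such a pattern into a codeword spawns at least as many new incorrect patterns, in a nested fashion, so a non-canonical word can never return to canonical form. Your proposal names the desired property (equality propagation confined to blocks, deletion patterns matching only at intended sites) but supplies neither concrete codes nor any argument of this kind, so the inclusion $h^{-1}(L(S))\subseteq\mathcal L$ remains unproven. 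In short: the skeleton is right, but both load-bearing components --- the universality of the underlying string device and the anti-cheating property of the codes --- are missing.
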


\begin{proof}
It is known that any recursively enumerable language can be generated by a
context-free insertion-deletion system using strings over a finite
alphabet~\cite{Margenstern}. This can be achieved by insertion of words of
length~3 (resp. 2) and deletion of words of length~2 (resp. 3). Let
$S'=(V',T',INS',DEL',A')$ be such kind of system with $L(S')=\mathcal L$. We
recall that $L(S')$ contains words over $T'$ reachable from the axioms of $A'$.

Let $c:\mathcal A\to \FDRW$ be the morphism defined as follows:
$c(\mathfrak{a}_i)=(ab)^Ka^i(ba)^K$, $1\le i\le n$, where $n=|\mathcal A|$ and
$K>n+2$. We will call $c(\mathfrak{a}_i)$ the \emph{code} of the letter
$\mathfrak{a}_i$. We say that $w\in \RW$ is in \emph{canonical} form if
$c^{-1}(w)\ne\emptyset$. Consider the extension of $c$ to languages and let
$INS=c(INS')$, $DEL=c(DEL')$ and $A=c(A')$. We also define $h(a)=c(a)$, if
$a\in T'$.

We claim that $\mathcal L=h^{-1}(L(S))$. Clearly, due to the construction of
$S$ we immediately obtain that $L(S)$ contains the image by $c$ of all
sentential forms used to obtain a word from $L(S')$. Next, we remark that the
inverse morphism $h^{-1}$ permits to select only relational words in canonical
form corresponding to the concatenation of codes of terminal letters from $T'$,
therefore its application yields a word from $L(S')$. Thus we obtain that
$\mathcal L\subseteq h^{-1}(L(S))$.

In order to show the converse inclusion we need to prove that no other words
except those corresponding to derivations of $S'$ can be obtained. This can be
formalized as follows.

\begin{claim}
For any derivation $w_1\Rightarrow\dots{}\Rightarrow w_n$, $w_i\ne w_j$, $1\le
i< j\le n$ in $S$, if $w_n$ is in canonical form, then any $w_k$, $1\le k< n$
is in canonical form.
\end{claim}

In order to prove the above assertion we will show that having $w\Rightarrow
w'$, with $w$ being in canonical form and $w'$ not being in canonical form,
implies that $w'\not\Rightarrow^* w''$ with $w''$ being in canonical form
($w''\ne w$). We shall prove this statement by contradiction. To simplify the
arguments we  assume that the sequence of derivations leading from $w'$ to
$w''$ does not contain repeated words.
We remark that $w'$ can only be
obtained by an insertion from $w$ at a position not corresponding to the
codeword boundary. Hence, in order to obtain a canonical word it is needed to
``break'' a sequence of codewords into pieces by insertion and to reconstruct
new different codewords from these pieces. Since the deletion operation is
performed for words in canonical form, a subword in canonical form should be
obtained using the insertion operation.

We recall that each codeword $c(\mathfrak{a}_i)$ is composed from 3 different
parts: the left part -- $(ab)^K$, the middle part -- $a^i$ and the right part
-- $(ba)^K$. Since these parts can never match each other, the only way to
obtain a subword in canonical form is to construct it symbol by symbol by a
nested insertion of at most $i+2K$ codewords, each insertion should be done
after the first (or before the last) symbol of each newly inserted word. We
remark that this method permits to construct any sequence of symbols. However,
since it takes the first letter away (the case of the last letter is similar),
the remaining $i+2K$ left parts will contain the sequence $b(ab)^{K-1}$. Such a
pattern can be completed to a codeword using the method above, but this
introduces at least $2K(K-1)$ same patterns $b(ab)^{K-1}$. Another possibility
to complete it is to insert a codeword and to use its rightmost letter $a$, but
this yields to the formation of the pattern $(ba)^{K-1}b$, hence the number of
incorrect patterns does not change.

To conclude, in order to construct a pattern corresponding to a single codeword
$c(\mathfrak{a}_i)$ at least $i+2K$ ``incorrect'' patterns (that need to be
completed to a codeword) are generated. Moreover, the generation is performed
in a nested manner, so no parts of it can form a subword in canonical form.
Since each ``completion'' step introduces more words to be completed, this
process can never lead to a word in a canonical form.

Now to conclude the proof of the theorem we remark that if every derivation in
$S$ is using words in canonical form, then this directly corresponds to a
derivation in $S'$. Hence, no new words can be obtained yielding $\mathcal
L\supseteq h^{-1}(L(S))$, which concludes the proof.
\end{proof}

Since the membership problem for recursively enumerable languages is
undecidable we obtain the following corollary.

\begin{corollary}
Given an  insertion-deletion system $S=(\mathcal V,INS,DEL,A)$ and a relational
word $X$, it is undecidable, whether  $X\in L(S)$, i.e., whether $Z \insdel^*
X$, $Z\in A\cup\{\varepsilon\}$.
\end{corollary}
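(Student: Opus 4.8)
The plan is to exhibit the undecidability as a straightforward many-one reduction from the membership problem of a fixed recursively enumerable language, using Theorem~\ref{thm:univ} as a black box that already carries all the computational content.

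First I would fix a recursively enumerable language $\mathcal L$ over a finite alphabet $\mathcal A$ whose membership problem is undecidable; such a language exists because there are recursively enumerable sets that are not recursive (e.g.\ the halting set). I would also fix any admissible alphabet $\mathcal V$ with $|\mathcal V|>2$. Applying Theorem~\ref{thm:univ} to $\mathcal L$ and $\mathcal V$ yields an insertion-deletion system $S=(\mathcal V,INS,DEL,A)$ over relational words together with a morphism $h$ satisfying $\mathcal L=h^{-1}(L(S))$. Unfolding the definition of preimage, this equality says precisely that for every $w\in\mathcal A^*$ we have $w\in\mathcal L$ if and only if $h(w)\in L(S)$.

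Next I would note that the map $w\mapsto h(w)$ is effectively computable as a function from strings to relational words. Indeed, $h$ is a morphism completely determined by the finitely many images $h(a)\in\FDRW$, $a\in\mathcal A$, each of which is an explicitly specified fully defined relational word (identified with a string in its canonical encoding). Hence, given any input $w=a_1\cdots a_n$, one can algorithmically construct the relational word $X:=h(w)=h(a_1)\cdots h(a_n)$.

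The reduction is then immediate: suppose, for contradiction, that the membership problem for insertion-deletion systems over relational words were decidable, i.e.\ that some algorithm, given a system and a relational word $X$, decides whether $X\in L(S)$. Running this algorithm on the fixed system $S$ together with the computed word $X=h(w)$ would decide whether $h(w)\in L(S)$, and therefore, by the equivalence above, whether $w\in\mathcal L$. This contradicts the undecidability of membership in $\mathcal L$, so the membership problem for $L(S)$ must be undecidable. The only point that requires any care---and the one I would state explicitly---is the effectiveness of the reduction, namely that $h(w)$ can actually be built; but this follows at once from $h$ being a fixed morphism with explicitly given, finitely describable images, so no genuine obstacle arises here. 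All the hard work has been absorbed into Theorem~\ref{thm:univ}.
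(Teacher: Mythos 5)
Your proof is correct and follows essentially the same route as the paper, whose entire argument is the one-line remark that membership in recursively enumerable languages is undecidable, combined with Theorem~\ref{thm:univ}; you have merely unfolded that remark into the explicit many-one reduction $w\mapsto h(w)$. The one point you add beyond the paper---checking that $h(w)$ is effectively computable from the finitely many explicit images $h(a)$---is a worthwhile bit of rigor but not a departure in approach.
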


\noindent
{\bf \large Further remarks:}
We propose below two extensions of the model of insertion-deletion on
relational words introduced in this paper. First we remark that a rewriting
rule $u\to v$ can be seen as the deletion of $u$ and an insertion of $v$ at the
corresponding place. So, with small technical changes, the
definitions~\ref{def:insertion} and~\ref{def:deletion} can be combined into a
single definition for the rewriting operation. We remark that in the case of
rewriting, the counterpart of Theorem~\ref{thm:univ} becomes trivial as the
synchronization of the insertion and the deletion operation allows only
rewriting of adjacent codewords.

Another extension is to consider the counterpart of the contextual variants of
the insertion and deletion operation on strings~\cite{PRSbook}. In this case,
the insertion or the deletion is performed is a specific context. The
definition~\ref{def:insertion} can be adapted by first combining the left and
right contexts into a single word, using a pattern-matching step like in
definition~\ref{def:deletion} and then inserting the new word at the position
given by contexts and keeping the relations between the context and the
inserted word. For example, a rule $(a,ab,b)$ would find an occurrence of two
unequal symbols in the word and then would insert exactly between them two
symbols equal to the symbol at left (resp. right) of the current position. The
deletion operation can be defined similarly. In the case of contextual
insertion and deletion the counterpart of Theorem~\ref{thm:univ} is also
trivial, because it is possible to use the codes of entire symbols as left and
right context. This means, that the operations can only be performed if the
codewords are adjacent, i.e. in canonical form.







\appendix
\section{Appendix}

\begin{proof}[Proof of Lemma~\ref{lem:2}]
We prove  that in each $S\in I_3D_2$ we have $V  \insdel^* \varepsilon$ where $|V|=1$.  \\
Case 1. Let $S=(\{I\}, \{D\})$ where
$I=\left(
  \begin{array}{ccc}
    1 & 1 & 1 \\
    1 & 1 & 1 \\
    1 & 1 & 1 \\
  \end{array}
\right)$, $D=\left(
   \begin{array}{cc}
     1 & 1 \\
     1 & 1\\
   \end{array}
 \right)$.
In this case we have following derivation:
$V \insa[1]{I} V_1   \dela[3]{D}  V_2 \dela[1]{D} \varepsilon$, and in matrix notation we have
 $\left(
                 \begin{array}{c}
                   1 \\
                 \end{array}
               \right)
  \insa[1]{I} \left(
         \begin{array}{c|ccc|}
           1 & 2 & 2 & 2 \\
           \hline
            2 & 1 & 1 & 1 \\
           2 & 1 & \y 1 & \y 1 \\
           2 & 1 & \y 1 & \y 1 \\
           \hline
         \end{array}
       \right)
       \dela[3]{D}
       \left(
         \begin{array}{cc}
           \y 1 & \y 2 \\
           \y 2 & \y 1 \\
         \end{array}
       \right)
       \dela[1]{D}  \varepsilon
  $\\
Case 2. Let $S=(\{I\}, \{D\})$ where
 $I=\left(
  \begin{array}{ccc}
    1 & 1 & 1 \\
    1 & 1 & 1 \\
    1 & 1 & 1 \\
  \end{array}
\right)$, $D=\left(
   \begin{array}{cc}
     1 & 0 \\
     0 & 1\\
   \end{array}
 \right)$.

In this case we have $V \insa[0]{I} V_1  \insa[1]{I} V_2 \insa[1]{I} V_3 \dela[7]{D}  V_4 \dela[4]{D} V_5 \dela[3]{D}  V_6 \dela[1]{D} \varepsilon$
and in matrix notation we have $\left(
                 \begin{array}{c}
                   1 \\
                 \end{array}
               \right)
\insa[0]{I}\left(
         \begin{array}{|ccc|c}
         \hline
           1 & 1 & 1 & 2 \\
           1 & 1 & 1 & 2 \\
           1 & 1 & 1 & 2 \\
           \hline
           2 & 2 & 2 & 1 \\
         \end{array}
       \right)
\insa[1]{I} \left(
       \begin{array}{c|ccc|ccc}
         1 & 2 & 2 & 2 & 1 & 1 & 2 \\
         \hline
         2 & 1 & 1 & 1 & 2 & 2 & 2 \\
         2 & 1 & 1 & 1 & 2 & 2 & 2 \\
         2 & 1 & 1 & 1 & 2 & 2 & 2 \\
         \hline
         1 & 2 & 2 & 2 & 1 & 1 & 2 \\
         1 & 2 & 2 & 2 & 1 & 1 & 2 \\
         2 & 2 & 2 & 2 & 2 & 2 & 1 \\
       \end{array}
     \right)
\insa[1]{I}
 \left(
   \begin{array}{c|ccc|cccccc}
     1 & 2 & 2 & 2 & 2 & 2 & 2 & 1 & 1 & 2 \\
     \hline
     2 & 1 & 1 & 1 & 2 & 2 & 2 & 2 & 2 & 2 \\
     2 & 1 & 1 & 1 & 2 & 2 & 2 & 2 & 2 & 2 \\
     2 & 1 & 1 & 1 & 2 & 2 & 2 & 2 & 2 & 2 \\
     \hline
     2 & 2 & 2 & 2 & 1 & 1 & 1 & 2 & 2 & 2 \\
     2 & 2 & 2 & 2 & 1 & 1 & 1 & 2 & 2 & 2 \\
     2 & 2 & 2 & 2 & 1 & 1 & 1 & 2 & 2 & 2 \\
     1 & 2 & 2 & 2 & 2 & 2 & 2 & 1 & 1 & 2 \\
     1 & 2 & 2 & 2 & 2 & 2 & 2 & 1 & \y1 & \y2 \\
     2 & 2 & 2 & 2 & 2 & 2 & 2 & 2 & \y2 & \y1 \\
   \end{array}
 \right)
 \dela[7]{D}
  \left(
   \begin{array}{cccccccc}
     1 & 2 & 2 & 2 & 2 & 2 & 2 & 1  \\
     2 & 1 & 1 & 1 & 2 & 2 & 2 & 2  \\
     2 & 1 & 1 & 1 & 2 & 2 & 2 & 2  \\
     2 & 1 & 1 & 1 & 2 & 2 & 2 & 2 \\
     2 & 2 & 2 & 2 & 1 & 1 & 1 & 2  \\
     2 & 2 & 2 & 2 & 1 & 1 & 1 & 2  \\
     2 & 2 & 2 & 2 & 1 & 1 & \y1 & \y2  \\
     1 & 2 & 2 & 2 & 2 & 2 & \y2 & \y1 \\
   \end{array}
 \right)
 \dela[4]{D}
  \left(
   \begin{array}{cccccc}
     1 & 2 & 2 & 2 & 2 & 1  \\
     2 & 1 & 1 & 1 & 2 & 2  \\
     2 & 1 & 1 & 1 & 2 & 2  \\
     2 & 1 & 1 & \y1 & \y2 & 2 \\
     2 & 2 & 2 & \y2 & \y1 & 1  \\
     1 & 2 & 2 & 2 & 1 & 1 \\
   \end{array}
 \right)
 \dela[3]{D}
  \left(
   \begin{array}{cccc}
     1 & 2 & 2 & 2   \\
     2 & 1 & 1 & 2   \\
     2 & 1 & \y1 & \y2   \\
     2 & 2 & \y2 & \y1  \\
   \end{array}
 \right)
 \dela[1]{D}
  \left(
   \begin{array}{cc}
     \y1 & \y2   \\
     \y2 & \y1    \\
   \end{array}
 \right)
     \del \varepsilon
     $

Case 3. Let $S=(\{I\}, \{D\})$ where  $I=\left(
  \begin{array}{ccc}
    1 & 0 & 0 \\
    0 & 1 & 0 \\
    0 & 0 & 1 \\
  \end{array}
\right)$, $D=\left(
   \begin{array}{cc}
     1 & 1 \\
     1 & 1\\
   \end{array}
 \right)$.

In this case we have $V \insa[0]{I} V_1  \insa[2]{I} V_2 \insa[2]{I} V_3 \dela[2]{D}  V_4 \dela[5]{D} V_5 \dela[3]{D}  V_6 \dela[1]{D} \varepsilon$.

 Then  in matrix notation we have $\left(
                 \begin{array}{c}
                   1 \\
                 \end{array}
               \right)
\insa[0]{I}
   \left(
     \begin{array}{|ccc|c}
     \hline
       1 & 0 & 0 & 2 \\
       0 & 1 & 0 & 2 \\
       0 & 0 & 1 & 2 \\
       \hline
       2 & 2 & 2 & 1 \\
     \end{array}
   \right)
\insa[2]{I}
     \left(
       \begin{array}{cc|ccc|cc}
         1 & 0 & 2 & 2 & 2 & 0 & 2 \\
         0 & 1 & 2 & 2 & 2 & 0 & 2 \\
         \hline
         2 & 2 & 1 & 0 & 0 & 2 & 2 \\
         2 & 2 & 0 & 1 & 0 & 2 & 2 \\
         2 & 2 & 0 & 0 & 1 & 2 & 2 \\
         \hline
         0 & 0 & 2 & 2 & 2 & 1 & 2 \\
         2 & 2 & 2 & 2 & 2 & 2 & 1 \\
       \end{array}
     \right)
\insa[2]{I}
     \left(
       \begin{array}{cc|ccc|ccccc}
         1 & 0 & 2 & 2 & 2 & 2 & 2 & 2 & 0 & 2 \\
         0 & 1 & 2 & 2 & 2 & 2 & 2 & 2 & 0 & 2 \\
         \hline
         2 & 2 & 1 & 0 & 0 & 2 & 2 & 2 & 2 & 2 \\
         2 & 2 & 0 & 1 & 0 & 2 & 2 & 2 & 2 & 2 \\
         2 & 2 & 0 & 0 & 1 & 2 & 2 & 2 & 2 & 2 \\
         \hline
         2 & 2 & 2 & 2 & 2 & 1 & 0 & 0 & 2 & 2 \\
         2 & 2 & 2 & 2 & 2 & 0 & 1 & 0 & 2 & 2 \\
         2 & 2 & 2 & 2 & 2 & 0 & 0 & \y1 & \y2 & 2 \\
         0 & 0 & 2 & 2 & 2 & 2 & 2 & \y2 & \y1 & 2 \\
         2 & 2 & 2 & 2 & 2 & 2 & 2 & 2 & 2 & 1 \\
       \end{array}
     \right)
\dela[2]{D}
          \left(
       \begin{array}{cccccccc}
         1 & 0 & 2 & 2 & 2 & 2 & 2 & 2 \\
         0 & \y1 & \y2 & 2 & 2 & 2 & 2 & 2 \\
         2 & \y2 & \y1 & 0 & 0 & 2 & 2 & 2 \\
         2 & 2 & 0 & 1 & 0 & 2 & 2 & 2 \\
         2 & 2 & 0 & 0 & 1 & 2 & 2 & 2 \\
         2 & 2 & 2 & 2 & 2 & 1 & 0 & 2 \\
         2 & 2 & 2 & 2 & 2 & 0 & 1 & 2 \\
         2 & 2 & 2 & 2 & 2 & 2 & 2 & 1 \\
       \end{array}
     \right)
\dela[5]{D}
          \left(
       \begin{array}{cccccc}
         1 & 2 & 2 & 2 & 2 & 2 \\
         2 & 1 & 0 & 2 & 2 & 2 \\
         2 & 0 & 1 & 2 & 2 & 2 \\
         2 & 2 & 2 & 1 & 0 & 2 \\
         2 & 2 & 2 & 0 & \y1 & \y2 \\
         2 & 2 & 2 & 2 & \y2 & \y1 \\
       \end{array}
     \right)
\dela[3]{D}
          \left(
       \begin{array}{cccc}
          1 & 2 & 2 & 2 \\
          2 & 1 & 0 & 2 \\
          2 & 0 & \y1 & \y2 \\
          2 & 2 & \y2 & \y1 \\
       \end{array}
     \right)
\dela[1]{D}
          \left(
       \begin{array}{cc}
           \y1 & \y2 \\
           \y2 & \y1 \\
       \end{array}
     \right)
     \del \varepsilon
  $

 Case 4. Let $I=\left(
  \begin{array}{ccc}
    1 & 0 & 0 \\
    0 & 1 & 0 \\
    0 & 0 & 1 \\
  \end{array}
\right)$, $D=\left(
   \begin{array}{cc}
     1 & 0 \\
     0 & 1\\
   \end{array}
 \right)$.

 Then we have $V \insa[0]{I} V_1   \dela[1]{D}  V_2 \dela[1]{D} \varepsilon$.

 Then in matrix notation $\left(
                 \begin{array}{c}
                   1 \\
                 \end{array}
               \right)
\insa[0]{I}
  \left(
        \begin{array}{|ccc|c}
        \hline
          1 & 0 & 0 & 2 \\
          0 & 1 & 0 & 2 \\
          0 & 0 & \y1 & \y2 \\
          \hline
          2 & 2 & \y2 & \y1 \\
        \end{array}
      \right)
\dela[1]{D}
  \left(
        \begin{array}{cc}
          \y1 & \y0  \\
          \y0 & \y1 \\
        \end{array}
      \right)
\dela[1]{D}  \varepsilon
  $

Case 5.  Let $I=\left(
  \begin{array}{ccc}
    1 & 1 & 0 \\
    1 & 1 & 0 \\
    0 & 0 & 1 \\
  \end{array}
\right)$, $D=\left(
   \begin{array}{cc}
     1 & 1 \\
     1 & 1\\
   \end{array}
 \right)$.

 Then we have $V \insa[0]{I} V_1   \dela[1]{D}  V_2 \dela[1]{D} \varepsilon$

and in matrix notation
we have $\left(
                 \begin{array}{c}
                   1 \\
                 \end{array}
               \right)
\insa[0]{I}
  \left(
        \begin{array}{|ccc|c}
        \hline
          \y1 & \y1 & 0 & 2 \\
          \y1 & \y1 & 0 & 2 \\
          0 & 0 & 1 & 2 \\
          \hline
          2 & 2 & 2 & 1 \\
        \end{array}
      \right)
 \dela[1]{D}
  \left(
        \begin{array}{cc}
          \y1 & \y1  \\
          \y1 & \y1 \\
        \end{array}
      \right)
 \dela[1]{D} \varepsilon
  $

Case 6. Let $I=\left(
  \begin{array}{ccc}
    1 & 1 & 0 \\
    1 & 1 & 0 \\
    0 & 0 & 1 \\
  \end{array}
\right)$, $D=\left(
   \begin{array}{cc}
     1 & 0 \\
     0 & 1\\
   \end{array}
 \right)$.
Then we have $V \insa[0]{I} V_1   \dela[2]{D}  V_2 \dela[1]{D} \varepsilon$ and in matrix notation
 we have $\left(
                 \begin{array}{c}
                   1 \\
                 \end{array}
               \right)
\insa[0]{I}
  \left(
        \begin{array}{|ccc|c}
        \hline
          1 & 1 & 0 & 2 \\
          1 & \y1 & \y0 & 2 \\
          0 & \y0 & \y1 & 2 \\
          \hline
          2 & 2 & 2 & 1 \\
        \end{array}
      \right)
\dela[2]{D}
  \left(
        \begin{array}{cc}
          \y1 & \y2  \\
          \y2 & \y1 \\
        \end{array}
      \right)
\dela[1]{D} \varepsilon
  $

 Case 7.  Let $I=\left(
  \begin{array}{ccc}
    1 & 0 & 0 \\
    0 & 1 & 1 \\
    0 & 1 & 1 \\
  \end{array}
\right)$, $D=\left(
   \begin{array}{cc}
     1 & 1 \\
     1 & 1\\
   \end{array}
 \right)$.

Then we have $V \insa[0]{I} V_1   \dela[2]{D}  V_2 \dela[1]{D} \varepsilon$ and in matrix notation
 we have $\left(
                 \begin{array}{c}
                   1 \\
                 \end{array}
               \right)
\insa[0]{I}
  \left(
        \begin{array}{|ccc|c}
        \hline
          1 & 0 & 0 & 2 \\
          0 & \y1 & \y1 & 2 \\
          0 & \y1 & \y1 & 2 \\
          \hline
          2 & 2 & 2 & 1 \\
        \end{array}
      \right)
 \dela[2]{D}
  \left(
        \begin{array}{cc}
          \y1 & \y2  \\
          \y2 & \y1 \\
        \end{array}
      \right)
 \dela[1]{D}\varepsilon
  $

  Case 8.  Let $I=\left(
  \begin{array}{ccc}
    1 & 0 & 0 \\
    0 & 1 & 1 \\
    0 & 1 & 1 \\
  \end{array}
\right)$, $D=\left(
   \begin{array}{cc}
     1 & 0 \\
     0 & 1\\
   \end{array}
 \right)$.

Then we have $V \insa[0]{I} V_1   \dela[3]{D}  V_2 \dela[1]{D} \varepsilon$ and in matrix notation
we have $\left(
                 \begin{array}{c}
                   1 \\
                 \end{array}
               \right)
 \insa[0]{I}
  \left(
        \begin{array}{|ccc|c}
        \hline
          1 & 0 & 0 & 2 \\
          0 & 1 & 1 & 2 \\
          0 & 1 & \y1 & \y2 \\
          \hline
          2 & 2 & \y2 & \y1 \\
        \end{array}
      \right)
 \dela[3]{D}
  \left(
        \begin{array}{cc}
          \y1 & \y0  \\
          \y0 & \y1 \\
        \end{array}
      \right)
 \dela[1]{D}  \varepsilon
  $

Case 9.  Let $I=\left(
  \begin{array}{ccc}
    1 & 0 & 1 \\
    0 & 1 & 0 \\
    1 & 0 & 1 \\
  \end{array}
\right)$, $D=\left(
   \begin{array}{cc}
     1 & 1 \\
     1 & 1\\
   \end{array}
 \right)$.

Then we have $V \insa[0]{I} V_1  \insa[0]{I} V_2 \insa[2]{I} V_3 \dela[6]{D}  V_4 \dela[5]{D} V_5 \dela[2]{D}  V_6 \dela[1]{D} \varepsilon$.
In matrix notation  we have $\left(
                 \begin{array}{c}
                   1 \\
                 \end{array}
               \right)
\insa[0]{I}
  \left(
        \begin{array}{|ccc|c}
        \hline
          1 & 0 & 1 & 2 \\
          0 & 1 & 0 & 2 \\
          1 & 0 & 1 & 2 \\
          \hline
          2 & 2 & 2 & 1 \\
        \end{array}
      \right)
\insa[0]{I}
  \left(
    \begin{array}{|ccc|cccc}
    \hline
      1 & 0 & 1 & 2 & 2 & 2 & 2 \\
      0 & 1 & 0 & 2 & 2 & 2 & 2 \\
      1 & 0 & 1 & 2 & 2 & 2 & 2 \\
      \hline
      2 & 2 & 2 & 1 & 0 & 1 & 2 \\
      2 & 2 & 2 & 0 & 1 & 0 & 2 \\
      2 & 2 & 2 & 1 & 0 & 1 & 2 \\
      2 & 2 & 2 & 2 & 2 & 2 & 1 \\
    \end{array}
  \right)
\insa[2]{I}
  \left(
    \begin{array}{cc|ccc|ccccc}
      1 & 0 & 2 & 2 & 2 & 1 & 2 & 2 & 2 & 2 \\
      0 & 1 & 2 & 2 & 2 & 0 & 2 & 2 & 2 & 2 \\
      \hline
      2 & 2 & 1 & 0 & 1 & 2 & 2 & 2 & 2 & 2 \\
      2 & 2 & 0 & 1 & 0 & 2 & 2 & 2 & 2 & 2 \\
      2 & 2 & 1 & 0 & 1 & 2 & 2 & 2 & 2 & 2 \\
      \hline
      1 & 0 & 2 & 2 & 2 & 1 & 2 & 2 & 2 & 2 \\
      2 & 2 & 2 & 2 & 2 & 2 & 1 & 0 & 1 & 2 \\
      2 & 2 & 2 & 2 & 2 & 2 & 0 & 1 & 0 & 2 \\
      2 & 2 & 2 & 2 & 2 & 2 & 1 & 0 & \y1 & \y2 \\
      2 & 2 & 2 & 2 & 2 & 2 & 2 & 2 & \y2 & \y1 \\
    \end{array}
  \right)
 \dela[6]{D}
 \left(
    \begin{array}{cccccccc}
      1 & 0 & 2 & 2 & 2 & 1 & 2 & 2  \\
      0 & 1 & 2 & 2 & 2 & 0 & 2 & 2  \\
      2 & 2 & 1 & 0 & 1 & 2 & 2 & 2 \\
      2 & 2 & 0 & 1 & 0 & 2 & 2 & 2 \\
      2 & 2 & 1 & 0 & 1 & 2 & 2 & 2 \\
      1 & 0 & 2 & 2 & 2 & \y1 & \y2 & 2  \\
      2 & 2 & 2 & 2 & 2 & \y2 & \y1 & 0  \\
      2 & 2 & 2 & 2 & 2 & 2 & 0 & 1  \\
     \end{array}
  \right)
 \dela[5]{D}
 \left(
    \begin{array}{cccccc}
      1 & 0 & 2 & 2 & 2 & 2  \\
      0 & 1 & 2 & 2 & 2 & 2  \\
      2 & 2 & 1 & 0 & 1 & 2 \\
      2 & 2 & 0 & 1 & 0 & 2 \\
      2 & 2 & 1 & 0 & \y1 & \y2 \\
      2 & 2 & 2 & 2 & \y2 & \y1  \\
     \end{array}
  \right)
 \dela[2]{D}
 \left(
    \begin{array}{cccc}
      1 & 0 & 2 & 2  \\
      0 & \y1 & \y2 & 2 \\
      2 & \y2 & \y1 & 0 \\
      2 & 2 & 0 & 1  \\

     \end{array}
  \right)
 \dela[1]{D}
 \left(
    \begin{array}{cc}
      \y1 & \y2  \\
      \y2 & \y1  \\
     \end{array}
  \right)
   \del \varepsilon
  $

Case 10.   Let $I=\left(
  \begin{array}{ccc}
    1 & 0 & 1 \\
    0 & 1 & 0 \\
    1 & 0 & 1 \\
  \end{array}
\right)$, $D=\left(
   \begin{array}{cc}
     1 & 0 \\
     0 & 1\\
   \end{array}
 \right)$.

Then we have $V \insa[0]{I} V_1   \dela[1]{D}  V_2 \dela[1]{D} \varepsilon$.

 Then in matrix notation we have $\left(
                 \begin{array}{c}
                   1 \\
                 \end{array}
               \right)
\insa[0]{I}
  \left(
        \begin{array}{|ccc|c}
        \hline
          \y1 & \y0 & 1 & 2 \\
          \y0 & \y1 & 0 & 2 \\
          1 & 0 & 1 & 2 \\
          \hline
          2 & 2 & 2 & 1 \\
        \end{array}
      \right)
\dela[1]{D}
  \left(
        \begin{array}{cc}
          \y1 & \y2  \\
          \y2 & \y1 \\
        \end{array}
      \right)
\dela[1]{D} \varepsilon
  $

\end{proof}

\begin{proof}[Proof of Lemma~\ref{lem:4}]

First we show that for every fully defined relational word $V$ we have if $V\in FDL(S)$, then all the symbols in $V$ are equal.

By Lemma~\ref{lem:1} for every $V$ we have that $\varepsilon  \insdel^* V$ iff there is $V'\in\mathbb{RW}$ such that $\varepsilon \ins ^* V' \del ^* V$.
Since in $I$ all symbols are equal, by the definition of $\ins$ we have that if a relational word $V'$ is obtained from $\varepsilon$ by any number of insertions of $I$, then there are no inequal symbols in $V'$.
Since all the symbols in $D$ are also equal, by the definition of $\del$ while applying the deletion
rule we cannot define that some symbols are inequal.  Thus if $\varepsilon  \insdel^* V$ and $V$ is fully defined , then for every $x,y\in X^V$ we have $(x,y)\in E^V$.

 Now we prove by induction on the length of the word  that for  every $n \in \mathbb{N}$ we have  $\varepsilon  \insdel^* V$  where $V$ is a fully defined relational word of length $n$ such that for every $x,y\in X^V$ we have $(x,y)\in E^V$.

 For the induction base, we suppose that $n = 1$. If $S\in I_2D_3$, i.e., the rule $I$ consists of two equal symbols and the rule $D$ consists of three equal symbols, then to derive $V$ from $\varepsilon$  we have to apply the insertion rule twice and then the deletion rule once. The illustration for this case is Fig. 6.

\begin{figure}[h]
\centering
\includegraphics[scale=0.5]{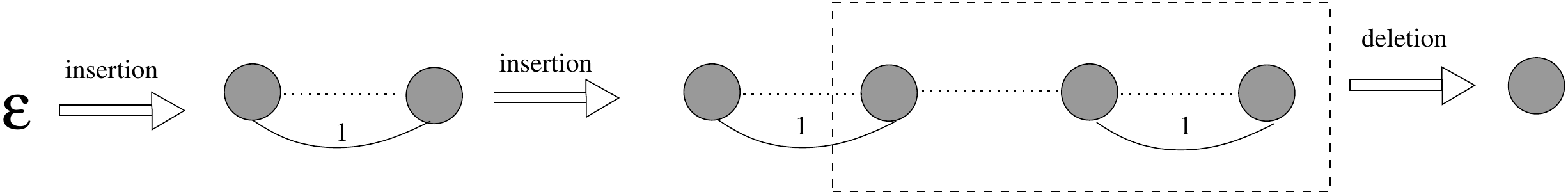}
\caption{ Obtaining $V$ from  $\varepsilon$ when $|V|=1$ and $S\in I_2D_3$}
\end{figure}

 If $S\in I_3D_2$, i.e., the rule $I$ consists of three equal symbols and the rule $D$ consists of two equal symbols, then to derive $V$ from $\varepsilon$  we have to apply the insertion rule once and then the deletion rule once. The illustration for this case is Fig. 7.

 \begin{figure}[h]
\centering
\includegraphics[scale=0.5]{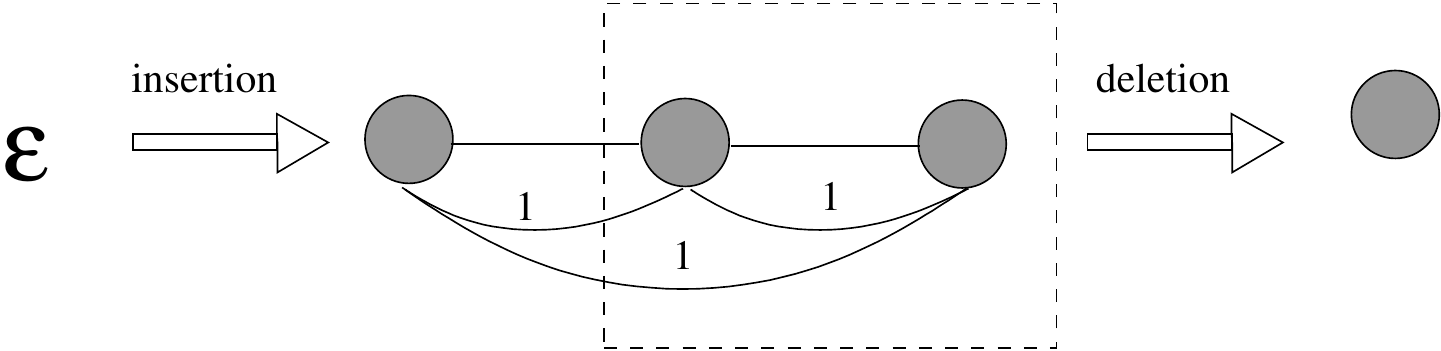}
\caption{Obtaining $V$ from  $\varepsilon$ when $|V|=1$ and $S\in I_3D_2$}
\end{figure}

 For the induction step, we assume that the hypothesis is true for all words of length $n$ or less.

 Let $V$ be a word of length $n$. If $S\in I_3D_2$, then we make one insertion at the end of the word $V$ and obtain the word of length $n + 3$. After that we delete the subword of two symbols - the last symbol of the word $V$ and the first symbol of those which we have just inserted. The relation between these symbols is undefined, so we define them to be equal. Then by the definition of the deletion rule we define that all new symbols are equal to all the symbols of word $V$. After deleting two symbols we obtain a new word of length $n+1$ where all symbols are equal. This case is shown on Fig. 8.

\begin{figure}[h]
\centering
\includegraphics[scale=0.5]{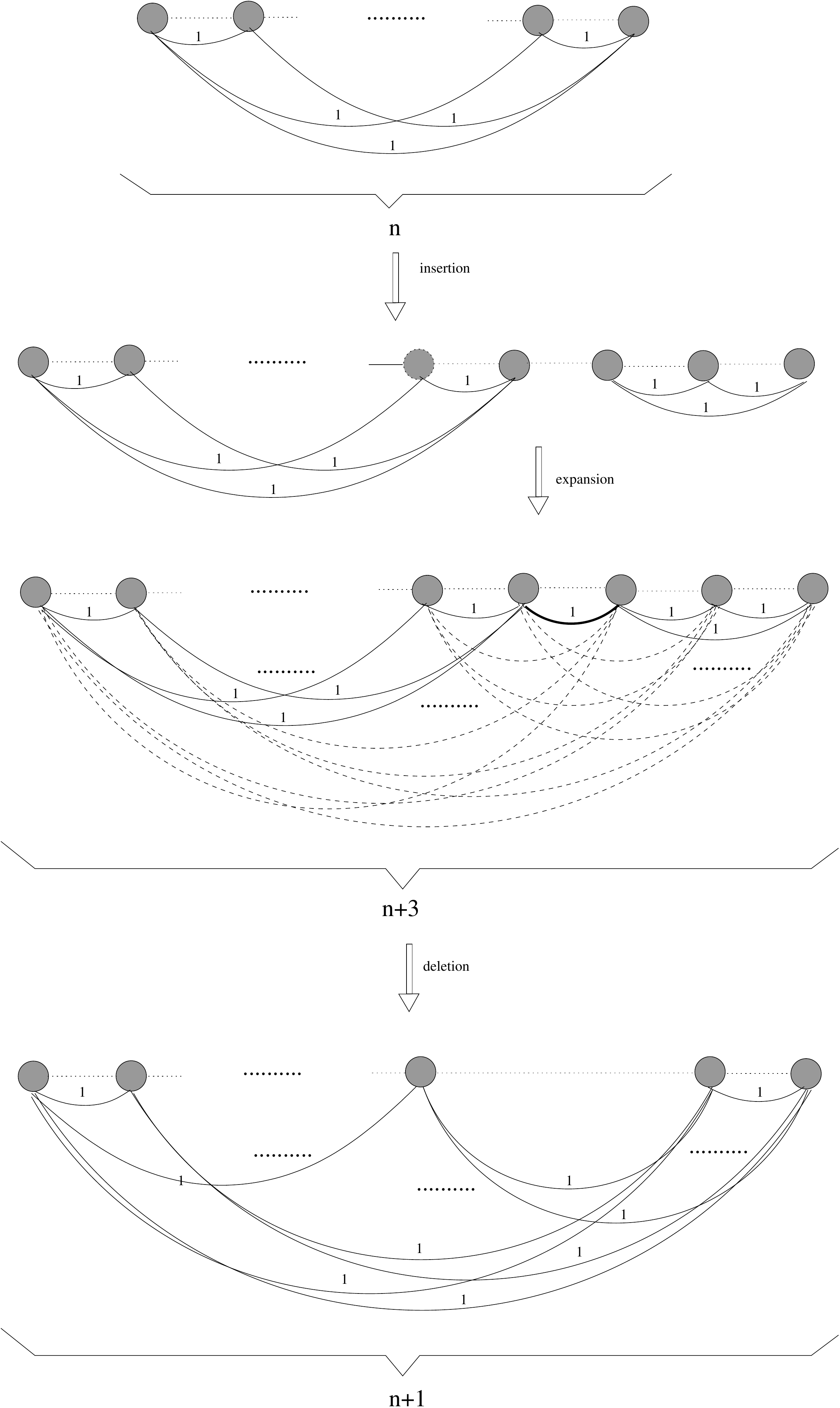}
\caption{}
\end{figure}

If $S\in I_2D_3$, then we insert two equal symbols at the end of the word $V$, and then insert another two between them. After that we delete the subword of length three that starts from the last symbol of the word $V$, defining all undefined symbols to be equal. Then after deletion we obtain a new word of length $n + 1$ where all symbols are equal. This case is shown on Fig. 9. This completes the induction step.

\begin{figure}[h]
\centering
\includegraphics[scale=0.5]{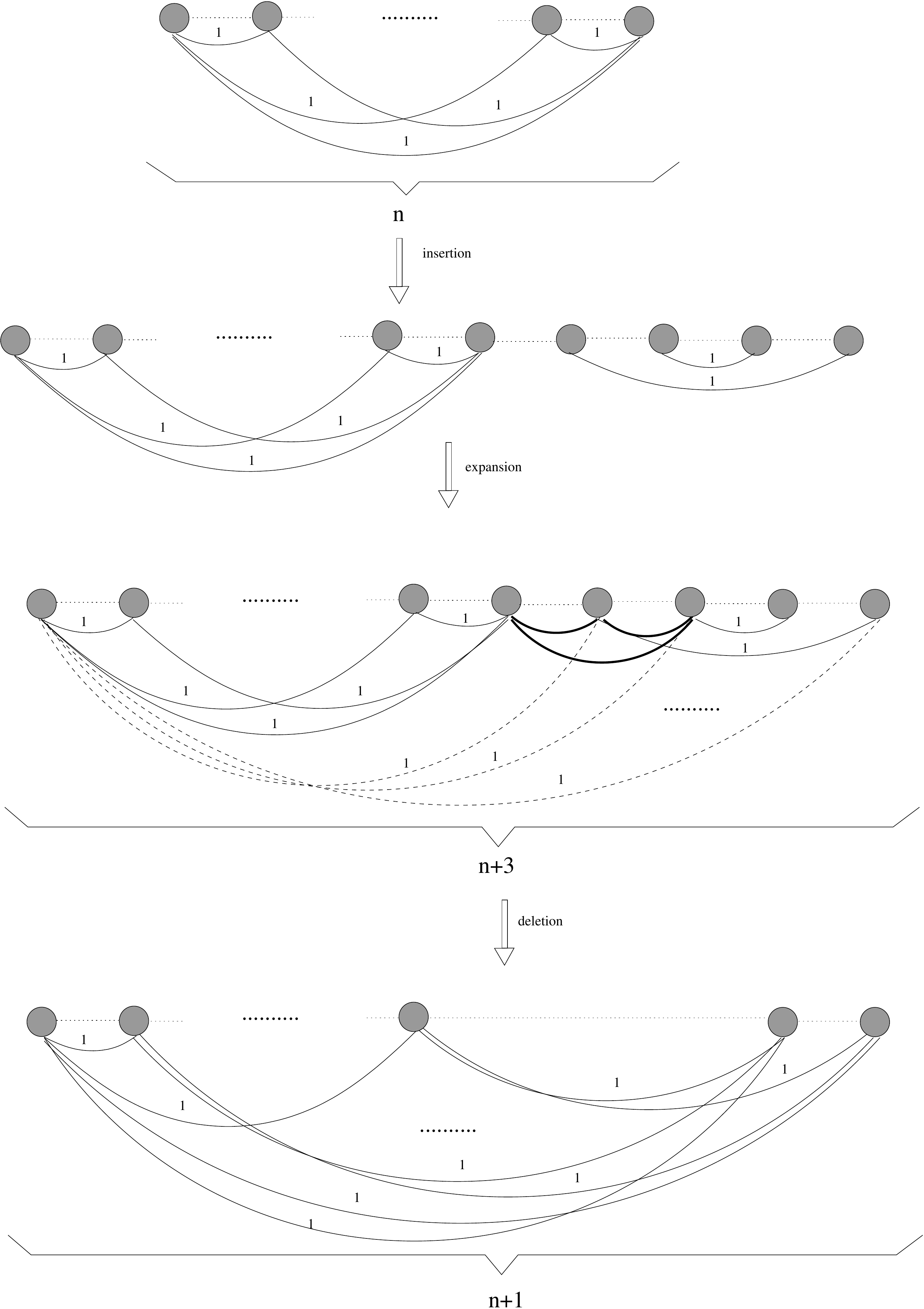}
\caption{}
\end{figure}

\end{proof}

\begin{proof}[Proof of Lemma~\ref{lem:5}]
Let $S=(\{I\}, \{D\})$ be a simple insertion-deletion system such that $S\in I_2D_3 \cup I_3D_2$ and $V$ be a relational word such that $V\in L(S)$, i.e., $\varepsilon  \insdel^* V$. Then by Lemma~\ref{lem:1} we have that there are $V_0, V_1, V_2,..\in \mathbb{RW}$ such that $\varepsilon  \ins^* V_0 \del V_1 \del V_2 \del ...\del V$.

By the definition of the insertion relation we have that for every relational words $X$ and $Y$ if $X\ins^* Y$, then
\begin{equation} \label{eq:maxE }
maxE(Y)=max (maxE(X), maxE(I))
\end{equation}
\begin{equation} \label{eq: maxFD}
maxFD(Y)=max( maxFD(X), maxFD(I))
\end{equation}
Then  we have
\begin{equation} \label{eq:maxE(V_0)}
maxE(V_0)= maxE(I)
\end{equation}
\begin{equation} \label{eq: maxFD(V_0)}
maxFD(V_0)=maxFD(I)=|I|
\end{equation}

Now we consider the relational word $D$.
Since either $I$ or $D$ contains unequal symbols, there are three cases for $D$:
\begin{enumerate}
  \item There are no equal symbols in $D$
  \item All symbols in $D$ are equal
  \item $D$ contains both equal and unequal symbols
\end{enumerate}

\textbf{Case 1.} By the definition of the deletion relation, we have that if there are no equal symbols in $D$, i.e., $maxE(D)=1$, then during expansion step we never define that some symbols are equal.

Hence, for every relational words $X$ and $Y$ if $X\del Y$, then
$maxE(Y)\leq maxE(X)$.

Then for every $V_i$ we have $maxE(V_i)\leq maxE(V_0)$ and thus by \eqref{eq:maxE(V_0)}
\begin{equation}
maxE(V_i)\leq maxE(I) \label{eq:maxE(V_i)-1}
\end{equation}
Since there are no equal symbols in $D$, the longest possible fully defined subword of $Y$ that we can obtain when we apply the deletion rule $X\del Y$ is of length $|D|\cdot (maxE(X)-1)$  (in this case there are $|D|$ groups of equal symbols of size $maxE(X)$  in the relational word $X$, and these groups are located in such a way that by applying one-step deletion we can define that the symbols of these groups are not equal).
Thus if $X\del Y$, then
\begin{equation}\label{eq:maxFD-1}
maxFD(Y)\leq max (maxFD(X) , |D|\cdot (maxE(X)-1))
\end{equation}

Then it could be easily shown by induction that for every $V_i$, $i\geq 1$ we have
\begin{equation}
maxFD(V_i)\leq max (|I|, |D|\cdot (maxE(I)-1)).
\end{equation}

\textsc{Base step.} Let $i=1$, then by \eqref{eq:maxFD-1} we have
\begin{equation}maxFD(V_1)\leq max (maxFD(V_0) , |D|\cdot (maxE(V_0)-1)).\end{equation}
Since by \eqref{eq: maxFD(V_0)}  $maxFD(V_0)=|I|$ and by \eqref{eq:maxE(V_0)} $maxE(V_0)=maxE(I)$, we have \\
$maxFD(V_1)\leq max(|I|, |D|\cdot (maxE(I)-1))$.

\textsc{Induction step. } Let us assume that there is $i \geq 1$ such that for every $j\leq i$ we have $maxFD(V_j)\leq max (|I|, |D|\cdot (maxE(I)-1))$. \\
Then  by \eqref{eq:maxFD-1} we have $maxFD(V_{i+1})\leq max (maxFD(V_i) , |D|\cdot (maxE(V_i)-1))$. \\
Since  by \eqref{eq:maxE(V_i)-1}  $maxE(V_i) \leq maxE(I)$, we have\\
$maxFD(V_{i+1})\leq max( max (|I|, |D|\cdot (maxE(I)-1)), |D|\cdot (maxE(I)-1) )$, i.e.  \\
$maxFD(V_{i+1})\leq max (|I|, |D|\cdot (maxE(I)-1))$. This completes the induction step.

Thus for every $V_i$ we have $maxFD(V_i)\leq max (|I|, |D|\cdot (maxE(I)-1))$ and hence there is a constant $k= max (|I|, |D|\cdot (maxE(I)-1))$ such that   $maxFD(V)\leq k$. Since $S\in I_2D_3 \cup I_3D_2$, it is obvious that $maxFD(V)\leq 4$.

\textbf{Case 2. }In a similar way it can be shown that in the case when all the symbols in the deletion rule $D$ are equal, for every $i$  we have
\begin{equation} \label{eq:2-1}
maxE(V_{i+1})\leq max (maxE(V_i), |D|\cdot (maxE(V_i)-1)),
\end{equation}
\begin{equation} \label{eq:2-2}
maxFD(V_{i+1})\leq max (maxFD(V_i), maxFD(V_i)+ (|D|-1)\cdot maxE(V_i)-|D| ).
\end{equation}
We show by induction that for every $V_i$ we have $maxFD(V_i)\leq |I|$.

\textsc{Base step.} Let $i=1$. By \eqref{eq:2-1} we have $maxE(V_1)\leq max (maxE(V_0), |D|\cdot (maxE(V_0)-1))$. Since $S\in I_2D_3 \cup I_3D_2$, we have that either $S\in I_2D_3$ and $|D|=3$ or $S\in I_3D_2$ and $|D|=2$.

If $|D|=2$  and all the symbols in the deletion rule $D$ equal, then $|I|=3$ and $maxE(I)\leq 2$. Then $maxE(V_1)\leq max (maxE(I), 2) \leq 2$.

 If $|D|=3$  and all the symbols in the deletion rule $D$ equal, then $|I|=2$ and $maxE(I)\leq 1$. Then $maxE(V_1)\leq maxE(I)\leq 1$.

By \eqref{eq:2-2} we have  $maxFD(V_1)\leq max (maxFD(V_0), maxFD(V_0)+ (|D|-1)\cdot maxE(V_0)-|D| )$.

If $|D|=2$, then $maxE(V_0)\leq 2$ and  $maxFD(V_1)\leq maxFD(V_0)\leq |I|$.

If $|D|=3$, then $maxE(V_0)\leq 1$  and again $maxFD(V_1)\leq maxFD(V_0)\leq |I|$.

\textsc{Induction step. } Let us assume that there is $i \geq 1$ such that for every $j\leq i$ we have $maxFD(V_j)\leq |I|$.

Since by \eqref{eq:2-1} $maxE(V_{i+1})\leq max (maxE(V_i), |D|\cdot (maxE(V_i)-1))$, we have that if $|D|=2$, then $maxE(V_i)\leq 2$, and if $|D|=3$, then $maxE(V_i)\leq 1$.

 Since  by \eqref{eq:2-2}  $maxFD(V_{i+1})\leq max (maxFD(V_i), maxFD(V_i)+ (|D|-1)\cdot maxE(V_0)-|D| )$, we have that
if $|D|=2$, then $maxFD(V_{i+1})\leq |I|$, and
if $|D|=3$,  then again $maxFD(V_{i+1})\leq |I|$.

Thus there is a constant $k= |I|$ such that  $maxFD(V)\leq k$.
Since $S\in I_2D_3 \cup I_3D_2$, we have $|I|\leq 3$, then $maxFD(V)\leq 3$.

\textbf{Case 3. }
 Since $S\in I_2D_3 \cup I_3D_2$ and $D$ contains both equal and unequal symbols, then  it is possible only when $|D|=3$ and $D$ contain two equal symbols and the third symbol is not equal to them.
In this case it can be shown that for every $i$  we have
\begin{equation} \label{eq:3-1}
maxE(V_{i+1})\leq max (maxE(V_i), 2\cdot (maxE(V_i)-1)),
\end{equation}
\begin{equation} \label{eq:3-2}
maxFD(V_{i+1})\leq max (maxFD(V_i),\\ maxFD(V_i)+ maxE(V_i)-2, 3\cdot(maxE(V_i)-1) ).
\end{equation}
 Since $|D|=3$, it follows that $|I|=2$, i.e., $maxE(V_0)\leq 2$ and $maxFD(V_0)\leq 2$.

We show by induction that for every $i$ we have  $maxFD(V_i)\leq 3$.

\textsc{Base step.} Let $i=1$. Then we have $maxE(V_1)\leq max (maxE(V_0), \\2\cdot (maxE(V_0)-1))$, i.e., $maxE(V_1)\leq 2$.

Since
$maxFD(V_1)\leq max (maxFD(V_0), maxFD(V_0)+ maxE(V_0)-2, 3\cdot(maxE(V_0)-1) )$, it follows that $maxFD(V_1)\leq 3$.

\textsc{Induction step. }  Let us assume that there is $i \geq 1$ such that for every $j\leq i$ we have $maxFD(V_j)\leq 3$.

Since  for every $i$  we have $maxE(V_{i+1})\leq max (maxE(V_i), 2\cdot (maxE(V_i)-1))$ and $maxE(V_0)\leq 2$, it follows that for every $i$ we have $maxE(V_i)\leq 2$.

Since $maxFD(V_{i+1})\leq max (maxFD(V_i), maxFD(V_i)+ maxE(V_i)-2, 3\cdot(maxE(V_i)-1) )$, we have   $maxFD(V_{i+1})\leq max (3, 3+ maxE(V_i)-2, 3\cdot(maxE(V_i)-1) )$.
Then $maxFD(V_{i+1})\leq 3$.
Thus there is a constant $k= 3$ such that  $maxFD(V)\leq k$.

\end{proof}

\end{document}